\documentclass[journal,twoside,web]{ieeecolor}
\usepackage{generic}
\usepackage{cite}
\usepackage{amsmath,amssymb,amsfonts}
\usepackage{algorithmic}
\usepackage{graphicx}
\usepackage{algorithm,algorithmic}
\usepackage{hyperref}

\hypersetup{hidelinks=true}
\usepackage{textcomp}

\usepackage{xcolor}
\usepackage{mathtools}
\usepackage{bm}

\usepackage{tikz}
\usetikzlibrary{fit}
\usetikzlibrary{arrows} 
\usetikzlibrary{decorations.pathmorphing}
\usetikzlibrary{matrix} 
\usetikzlibrary{calc}
\usetikzlibrary{backgrounds}

\newcommand{\KCBF}{\mathcal{U}_{h}}
\newcommand{\hpb}{h_{\mathrm{PB}}}
\newcommand{\ahpb}{\hat{h}_{\mathrm{PB}}}

\newcommand{\Spb}{\mathcal{S}_{\mathrm{PB}}}
\newcommand{\Dpb}{\mathcal{D}_{\mathrm{PB}}}

\DeclareMathOperator{\interior}{int}

\newtheorem{theorem}{Theorem}
\newtheorem{lemma}{Lemma}
\newtheorem{proposition}{Proposition}
\newtheorem{definition}{Definition}
\newtheorem{assumption}{Assumption}
\newtheorem{remark}{Remark}
\newtheorem{corollary}{Corollary}

\def\BibTeX{{\rm B\kern-.05em{\sc i\kern-.025em b}\kern-.08em
    T\kern-.1667em\lower.7ex\hbox{E}\kern-.125emX}}
\begin{document}
\title{Approximate predictive control barrier function for discrete-time systems}
\author{Alexandre Didier, \IEEEmembership{Graduate Student Member, IEEE}, and Melanie N. Zeilinger, \IEEEmembership{Member, IEEE}
\thanks{The authors are with the Institute for Dynamic Systems and Control, ETH Zurich, Zurich 8092, Switzerland. (e-mail: \{adidier,mzeilinger\}@ethz.ch).}
}

\maketitle
\begin{abstract}
We propose integrating an approximation of a predictive control barrier function (PCBF) in a safety filter framework, resulting in a prediction horizon independent formulation. The PCBF is defined through the value function of an optimal control problem and ensures invariance as well as stability of a safe set within a larger domain of attraction. We provide a theoretical analysis of the proposed algorithm, establishing input-to-state stability of the safe set with respect to approximation errors as well as exogenous disturbances. Furthermore, we propose a continuous extension of the PCBF within the safe set, reducing the impact of learning errors on filter interventions. We demonstrate the stability properties and computational advantages of the proposed algorithm on a linear system example and its application as a safety filter for miniature race cars in simulation.
\end{abstract}
\begin{IEEEkeywords}
Constrained control, machine learning, NL predictive control, stability of nonlinear systems
\end{IEEEkeywords}

\section{INTRODUCTION}
\label{sec:introduction}
Safety filters provide a modular mechanism which enables certifying and, if necessary, modifying any control input such that safety in terms of constraint satisfaction can be guaranteed. The filters are designed to be agnostic with respect to the main control performance objective and therefore aim to minimise the distance between the proposed input and the filtered input which is applied to the system. Thereby, safety filters are suitable tools to enhance the safety properties of various systems. The filters can be used to augment human agents operating systems, such as, e.g., automotive vehicles as demonstrated on miniature race cars in \cite{tearle2021}, and have also demonstrated their effectiveness in complex robotics applications, such as, e.g., walking on stepping stones with a legged robot in \cite{grandia2021}.

The core principle of safety filters consists in obtaining a control invariant set within the state constraints, such that inputs applied to the system guarantee that the system state remains within this invariant set. In order to obtain such a control invariant set, three main branches have been pursued, as discussed extensively in \cite{wabersich2023}. 
Hamilton-Jacobi reachability analysis, as discussed in \cite{chen2018}, allows computing the maximal control invariant subset to constraints, however the method suffers from poor computational scalability with respect to increasing state dimension.
Control barrier functions (CBFs) aim to alleviate the computational restrictions, by explicitly computing, potentially smaller, control invariant sets. An overview is provided in \cite{ames2019}. CBFs are obtained either through either expert knowledge or more scalable computational methods compared to the Hamilton-Jacobi reachability, such as semi-definite programming or sum-of-squares optimisation, such as in \cite{clark2021}. 
Finally, predictive safety filters have been proposed in \cite{wabersich2018}. They employ model predictive control (MPC) techniques in order to implicitly define a control invariant set through an online predictive optimisation problem. A detailed overview of such MPC techniques can be found in \cite{rawlings2017}. These predictive methods make use of discrete-time dynamics formulations and allow to easily provide recursive feasibility guarantees. They enable extensions to, e.g., robust treatments of disturbances, stochastic disturbances in \cite{wabersich21} or using online data for uncertain systems to improve safety filter performance in \cite{didier2021}, due to the well developed MPC theory. 

In practical systems, considering hard constraints in predictive control methods often leads to infeasibility due to unmodelled dynamics or sensor noise. Therefore, the state constraints on the system are often softened, as proposed in~\cite{kerrigan00} and used in~\cite{tearle2021}, by introducing slack variables. However, this generally leads to a loss of theoretical properties as soon as the system violates the constraints. In \cite{wabersich2022}, a constraint softening mechanism is proposed, which is able to provide theoretical guarantees when constraints are violated. Connections of this mechanism to CBFs are investigated and it is shown that by minimising a function which depends on the slack variables, a control barrier function is implicitly defined. This result implies that the implicitly defined safe set of a predictive safety filter is stable within a larger domain of attraction, such that the system is able to recover from unexpected disturbances or model mismatch. The main drawback of the method in \cite{wabersich2022} is the need to solve two predictive optimisation problems in order to guarantee stability, which results in a computationally demanding algorithm. 

\textit{Contributions.} In this work, we leverage an explicit approximation of the PCBF in a safety filter formulation. 
 This formulation requires predicting only a single time step ahead, allowing for efficient online computations. 
 The contributions compared to initial results in~\cite{didier2023} are as follows:
 \begin{enumerate}
     \item We extend initial input-to-state stability (ISS) analysis in~\cite{didier2023}, with respect to approximation errors, to the case of additional exogenous disturbances. This is achieved through a relation between control barrier function (CBF) definitions based on continuity and class $\mathcal{K}$ comparison functions, which enables a general robust set stability analysis. The provided analysis directly implies ISS of the PCBF algorithm in~\cite{wabersich2022}.
     \item We propose a novel continuous extension of the PCBF, leading to a significant reduction in safety filter interventions when the system is safe compared to~\cite{didier2023}.
     \item We introduce a hyperparameter that reduces the required CBF decrease in closed-loop compared to~\cite{didier2023} and~\cite{wabersich2022}, while preserving stability guarantees. 
     \item We show how the explicit approximation can be used in order to certify inputs using a single function evaluation, resulting in large computational speedups compared to~\cite{didier2023, wabersich2022} when safe inputs are proposed.
 \end{enumerate}
 
\textit{Related Work.}
Safety filters have been extensively researched and a review of state-of-the-art methods can be found in \cite{wabersich2023}. The first main branch of research consists of control barrier function based safety filters, originally proposed in \cite{wieland2007} for continuous time systems. The work in \cite{agrawal2017} proposed discrete-time control barrier functions. However, the proposed formulation uses a barrier function which approaches infinity as the state approaches the boundary of the safe set, which is not suitable for investigating the stability of the safe set within a larger domain. In \cite{chen2021}, a backup CBF formulation is proposed, which allows using a backup controller to enhance an invariant safe set, by computing gradients of a CBF on the closed-loop trajectory of the backup controller. Thereby, any state which reaches the safe set under the backup controller is still considered safe, and the safe set is effectively augmented. The predictive safety filters in \cite{wabersich2021} and \cite{wabersich2022} show relations to this method, where the optimiser is free to choose all control inputs in the prediction horizon, rather than selecting them with a prior fixed policy. In \cite{didier2024predictive}, the predictive safety filter is augmented with a stability guarantees of the origin using a Lyapunov decrease constraint. Compared to these works, we combine the advantages of the existence of inputs which stabilise a safe set and render it invariant, while using a single CBF decrease constraint rather than a full prediction horizon, reducing the computational effort. Due to the discrete-time formulation of the predictive controllers, the resulting CBF optimisation problem is a nonlinear program rather than a quadratic program in \cite{chen2021}, however the prediction horizon is not actually included in the online optimisation, effectively reducing the number of constraints.
The proposed formulation in \cite{breeden2022} uses a look-ahead to determine time instances, where given a path of the system, a CBF constraint is violated. Thereby, the control actions are modified such that safety can be guaranteed. However, this method does not consider the case when the input is constrained.
Finally, \cite{wiltz2023} proposes a reachability-based safety filter formulation in continuous time, which makes use of a finite lookahead time and establishes connections to model predictive safety filter methods. However, the method uses numerical approximations of the gradients of the CBF through sampling to impose the CBF decrease and it is unclear how approximation errors can impact the closed-loop performance. In comparison, we provide a rigorous analysis of how approximation errors of the learned CBF impact the closed-loop behaviour of the system.

Another main branch of research for safety filters is Hamilton-Jacobi reachability analysis, see, e.g., \cite{chen2018}. In \cite{choi2021}, connections between viscosity solution of the Hamilton-Jacobi-Isaacs variational inequality, retrieving the maximum robust control invariant subset to given constraints and CBFs are established and it is shown how this solution can be used in a quadratic program. This result is extended in \cite{choi2023} using a forward reachability approach, which shows that any CBF is also a forward reachability solution within the maximum robust control invariant set. Compared to these results, we consider stability of the control invariant set implicitly defined through the predictive safety filter optimisation problem within a larger domain in a soft-constrained manner. Furthermore, if the prediction horizon of the predictive safety filter goes to infinity, the maximum control invariant set can be recovered, see, e.g., \cite[Definition 2.9]{kerrigan2000}.

Numerous works consider approximation methods to reduce the computational drawbacks of online predictive control methods. Works such as, e.g., \cite{hertneck2018, paulson2020, nubert2020} are able to guarantee closed-loop properties such as stability and robust constraint satisfaction with respect to input uncertainty. However, the safety filter policy is in general non-continuous and multi-valued, such that continuous approximations can lead to arbitrary approximation errors. Other works consider learning the optimal value function of an MPC, such as, e.g., \cite{tamar2017, karnchanachari2020} to increase performance rather than to address general safety considerations. An overview for learning safety certificates is provided in \cite{dawson2023} with a main focus on certificate synthesis for continuous-time systems. In \cite{robey2020}, a control barrier function is learned from expert demonstrations with theoretical guarantees in case of Lipschitz continuity. These results could in principle be applicable for the proposed approximation scheme, however Lipschitz continuity of the predictive control barrier function is in general not guaranteed. Additionally, the required optimization problem is computationally complex, especially for neural network approximations. A discussion on loss function design for neural network approximations is provided in \cite{so2024}. In comparison, we provide guarantees on closed-loop system behaviour even if approximation errors are still present in the learned CBF and additionally consider the effect of exogenous disturbances.
In \cite{he2023}, a state-action control barrier function is approximated which makes use of contractivity of a safe set to ensure invariance of the resulting safe set for small enough approximation errors with a convex online optimisation problem. This results in the need to sample over the state-action space compared to sampling over the state space as proposed in this paper. Additionally, stability of the safe set itself is not considered in \cite{he2023} and recursive feasibility only holds if the approximation error is small enough, whereas the proposed formulation ensures feasibility everywhere.

\textit{Structure.} We discuss preliminaries in Section~\ref{sec:CBF} and summarize the PCBF algorithm in~\cite{wabersich2022}, which is the basis for the approximation used in this paper, in Section~\ref{sec:PCBF}. In Section~\ref{sec:APCBF}, we detail the proposed algorithm and the continuous extension of the PCBF, and in Section~\ref{sec:Theory}, we provide a relationship between two types of CBF definitions as well as the theoretical closed-loop analysis of the proposed algorithm. In Section~\ref{sec:NE}, we validate the approach in two numerical examples.

\textit{Notation.} Given a vector $a\in\mathbb{R}^n$ and a scalar $b\in\mathbb{R}\cup\{{\infty}\}$, an inequality $a < b$ denotes an element-wise inequality of each element of $a$, i.e. $[a]_i<b$ for all $i=1,\dots,n$. The distance of a vector $x$ to a set $\mathcal{A}$ is given by $\Vert x\Vert_\mathcal{A}:=\inf_{y\in\mathcal{A}} \Vert x-y\Vert$. A vector of appropriate dimensions with all entries equal to 1 is denoted as $\mathbf{1}$. 

\section{PRELIMINARIES} \label{sec:CBF}
Consider the discrete-time, nonlinear dynamical system
\begin{equation}
\label{eq:sys}
x(k+1) = f(x(k),u(k), w(k)) \;\; \forall k \in \mathbb{N},
\end{equation}
with state $x\in\mathbb{R}^n$, input $u\in\mathbb{R}^m$, exogenous disturbance $w(k)\in\mathbb{R}^p$ and continuous dynamics $f:\mathbb{R}^n \times \mathbb{R}^m\times \mathbb{R}^p \rightarrow \mathbb{R}^n$. The system is subject to state and input constraints 
\begin{equation}\label{eq:constraints}
x(k) \in \mathcal{X}:=\{x\in \mathbb{R}^n \mid c_x(x) \leq 0\}, \;\;
u(k) \in \mathcal{U},
\end{equation}
with $c_x:\mathbb{R}^n\rightarrow \mathbb{R}^{n_x}$.
The input constraints are considered to arise from, e.g., physical actuator limitations, while the state constraints arise from system specifications, such as, e.g., a limit on the velocity of a car on a highway lane. In practical predictive optimisation based methods, such constraints are typically formulated as soft constraints, see, e.g., \cite{kerrigan00}, in order to preserve feasibility under unknown disturbances of the optimisation problems which are solved online. Accordingly, we define the soft-constrained set
\begin{equation*}
    \mathcal{X}(\xi) := \{x\in \mathbb{R}^n \mid c_x(x)\leq \xi \},
\end{equation*}
where $\xi\in\mathbb{R}^{n_x}$ is a slack variable.
The constraint sets satisfy the following assumption. 

\begin{assumption}[Constraint Sets]\label{ass:cont}
    The sets $\mathcal{U}$ and $\mathcal{X}(\xi)$ are compact for all $0\leq \xi < \infty$ and $c_x$ is continuous.
\end{assumption}
\par
The goal is to design a predictive safety filter for system \eqref{eq:sys}. If an initial state violates the constraints $\mathcal{X}$, the system states should converge back into $\mathcal{X}$ and an initial state within a safe set should remain therein.
In order to ensure such closed-loop stability and constraint satisfaction, we make use of the concept of control barrier functions. These functions typically define an invariant set, see, e.g. \cite{blanchini1999}, as a zero sublevel set and provide stability guarantees within some domain with respect to this invariant set. Therefore, they are often considered to be an extension of a classical Lyapunov function, where the origin is stabilised. CBFs have been extensively studied in continuous-time dynamical systems, see, e.g., \cite{ames2019} for an overview, but have also been proposed in discrete-time in \cite{agrawal2017}. In this work, we consider two definitions for discrete-time CBFs introduced in \cite{wabersich2022} and \cite{rawlings2017}. These definitions differ in the properties the CBF needs to satisfy and we establish their relationship in Section~\ref{sec:Theory} in order to analyse robust stability of the proposed algorithm. 
\begin{definition}[Control Barrier Function - Continuous~\cite{wabersich2022}]
\label{def:CBFcont}
A function $h: \mathcal{D} \to \mathbb{R}$ is a discrete-time CBF with safe set $\mathcal{S} = \{x \in \mathbb{R}^n \mid h(x) \leq 0 \} \subset \mathcal{D}\subset \mathbb{R}^n$
 if the following conditions hold:\begin{enumerate}
    \item[1)] $\mathcal{S}, \mathcal{D}  $ are compact and non-empty sets,
    \item[2)] $h(x)$ is continuous on $\mathcal{D}$,
    \item[3)] $\exists \Delta h : \mathcal{D} \to \mathbb{R}$, with $\Delta h$ continuous and $\Delta h(x)>0 $ for all $x \in \mathcal{D} \setminus \mathcal{S} $ such that
\end{enumerate}
\vspace{-0.3cm}
\begin{subequations}
\label{eq:rate_cond_cbf}
\begin{align}
\hspace{-0.2cm}
    &\hspace{0.7cm}\forall x \in \mathcal{D} \setminus \mathcal{S}: \nonumber\\ & \inf_{u\in \mathcal{U}} \{h(f(x,u,0))\mid f(x,u,0)\in\mathcal{D}\}\!-\! h(x) \!\leq\! - \Delta h(x) \label{eq:outside_cbf_condition},\\
    &\hspace{0.72cm}\forall x \in \mathcal{S} \: : \inf_{u\in \mathcal{U}}h(f(x,u,0)) \leq 0. \label{eq:inside_cbf_condition}
\end{align}
\end{subequations}
\end{definition}
This first definition requires that the CBF $h$ is continuous and defines a set $\mathcal{S}$ as a zero sublevel set which is control invariant through the condition \eqref{eq:inside_cbf_condition}. Additionally, for any state outside the safe set, i.e., any $x\in\mathcal{D}\setminus\mathcal{S}$, there exists a continuous, positive decrease $\Delta h(x)$ of the CBF value. 
We note that compared to other CBF definitions such as in~\cite{agrawal2017}, we consider the 0-sublevel set as the safe set rather than the superlevel set.
If inputs that satisfy~\eqref{eq:rate_cond_cbf} are selected, it was shown in~\cite{wabersich2022} that the set $\mathcal{S}$ is asymptotically stable in $\mathcal{D}$ for the nominal closed-loop system by leveraging continuity-based arguments. 
Inputs which satisfy the CBF conditions are considered safe to be applied to the given dynamical system.
\begin{definition}[Safe inputs]\label{def:safeinp}
Given a CBF $h$ according to Definition~\ref{def:CBFcont}, an input $u$ applied to system \eqref{eq:sys} at state $x$ is said to be \textit{safe} for the nominal system, if $u \in \KCBF(x) \subseteq \mathcal{U}$ with
\begin{align} \label{eq:cbf_ctrl_law_cond}
    \KCBF(x) \coloneqq \left\{ \begin{array}{ll} 
    U_1(x), & \text{if } x\in \mathcal{D} \setminus \mathcal{S},\\
    U_2(x), & \text{if } x\in \mathcal{S},
    \end{array}\right.
\end{align}
where $U_1(x) \coloneqq \{ u \in \mathcal{U} \mid h(f(x,u,0)) - h(x) \leq -\Delta h(x)\}$ and $U_2(x) \coloneqq \{ u \in \mathcal{U} \mid h(f(x,u,0)) \leq 0 \}$.
\end{definition}
In order to formalise the closed-loop property of applying safe inputs to a system, we make use of the notion of asymptotic stability of invariant sets.

\begin{definition}[Asymptotic stability \cite{wabersich2022}]\label{def:asymstab}
    Let $\mathcal{S}$ and $\mathcal{D}$ be non-empty, compact and positively invariant sets for the system $x(k+1)=f(x(k),\kappa(x(k)),0)$ with $\kappa:\mathbb{R}^n\rightarrow\mathbb{R}^m$ such that $\mathcal{S}\subset\mathcal{D}$. The set $\mathcal{S}$ is called an asymptotically stable set in $\mathcal{D}$, if for all $x(0)\in\mathcal{D}$, the following conditions hold:
    \begin{subequations}
        \begin{align*}
        \forall \epsilon >0 \;\exists \delta >0: & \; \Vert x(0)\Vert_\mathcal{S} <\delta \Rightarrow \forall k>0: \Vert x(k)\Vert_\mathcal{S} < \epsilon, \\
        \exists \bar{\delta}>0: & \; \Vert x(0)\Vert_\mathcal{S} < \bar{\delta} \Rightarrow \lim_{k\rightarrow \infty} x(k)\in\mathcal{S}
        \end{align*}
    \end{subequations}
\end{definition}

Applying safe inputs to system \eqref{eq:sys} ensures that the set $\mathcal{S}$ is asymptotically stable in the domain $\mathcal{D}$ for the nominal system as given by the following result.

\begin{theorem}[\cite{wabersich2022}]
    Let $\mathcal{D}\subset\mathbb{R}^n$ be a non-empty and compact set. Consider a control barrier function $h:\mathcal{D}\rightarrow\mathbb{R}$ according to Definition~\ref{def:CBFcont} with $\mathcal{S}=\{x\in\mathbb{R}^n \mid h(x)\leq 0\}\subset \mathcal{D}$. If $\mathcal{D}$ is a forward invariant set for system $x(k+1)=f(x(k),u(k),0)$ under $u(k)=\kappa(x(k))$ for all $\kappa:\mathcal{D}\rightarrow\mathcal{U}$ with $\kappa(x)\in \KCBF(x)$, then it holds that
    \begin{enumerate}
        \item $\mathcal{S}$ is a forward invariant set
        \item $\mathcal{S}$ is asymptotically stable in $\mathcal{D}$
    \end{enumerate}
\end{theorem}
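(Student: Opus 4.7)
The plan is to use $h$ as a non-standard Lyapunov-like function for $\mathcal{S}$, exploiting the two-case structure of Definition~\ref{def:safeinp} together with the continuity and compactness assumptions. Forward invariance (part~1) is essentially immediate: for any $x(k)\in\mathcal{S}$, the hypothesis $\kappa(x(k))\in\KCBF(x(k))=U_2(x(k))$ yields $h(f(x(k),\kappa(x(k)),0))\le 0$, so $x(k+1)\in\mathcal{S}$; combined with forward invariance of $\mathcal{D}$, the closed-loop trajectory remains in $\mathcal{S}$ once it enters.

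For the stability part of Definition~\ref{def:asymstab}, I would fix an arbitrary (sufficiently small) $\epsilon>0$ and define
\[
c(\epsilon) \;:=\; \inf \bigl\{ h(x) \,\bigm|\, x\in\mathcal{D},\ \Vert x\Vert_\mathcal{S}\ge \epsilon \bigr\}.
\]
The set $\{x\in\mathcal{D}\mid \Vert x\Vert_\mathcal{S}\ge\epsilon\}$ is a closed subset of the compact set $\mathcal{D}$ and is disjoint from $\mathcal{S}$; continuity of $h$, together with the fact that $h(x)\le 0$ forces $x\in\mathcal{S}$, gives $c(\epsilon)>0$ and that this infimum is attained. Continuity of $h$ and compactness of $\mathcal{S}$ then provide $\delta>0$ with $\Vert x\Vert_\mathcal{S}<\delta\Rightarrow h(x)<c(\epsilon)$. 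The key invariant to propagate along the trajectory is $h(x(k))<c(\epsilon)$: if $x(k)\in\mathcal{S}$, part~1 yields $h(x(k+1))\le 0<c(\epsilon)$; otherwise \eqref{eq:outside_cbf_condition} evaluated at $u=\kappa(x(k))$ gives $h(x(k+1))\le h(x(k))-\Delta h(x(k))<c(\epsilon)$. Consequently $\Vert x(k)\Vert_\mathcal{S}<\epsilon$ for all $k\ge 0$.

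For attractivity, take any $x(0)\in\mathcal{D}$. If the trajectory reaches $\mathcal{S}$ in finite time, part~1 concludes. Otherwise $x(k)\in\mathcal{D}\setminus\mathcal{S}$ for every $k$, and telescoping \eqref{eq:outside_cbf_condition} yields
\[
\sum_{k=0}^{N-1}\Delta h(x(k)) \;\le\; h(x(0)) - h(x(N)).
\]
Because $h$ is continuous on the compact set $\mathcal{D}$ it is bounded, so the partial sums are bounded and $\Delta h(x(k))\to 0$. Any accumulation point $x^{*}$ of $\{x(k)\}\subset\mathcal{D}$ (which exists by compactness) satisfies $\Delta h(x^{*})=0$ by continuity of $\Delta h$, and the positivity of $\Delta h$ on $\mathcal{D}\setminus\mathcal{S}$ forces $x^{*}\in\mathcal{S}$; hence $\Vert x(k)\Vert_\mathcal{S}\to 0$.

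The main obstacle I expect is the asymmetric behaviour of $h$ around $\partial\mathcal{S}$: $h$ may take arbitrary non-positive values inside $\mathcal{S}$, so the textbook template $\alpha_1(\Vert x\Vert_\mathcal{S})\le h(x)\le\alpha_2(\Vert x\Vert_\mathcal{S})$ is not directly available on $\mathcal{D}$. The workaround above uses the compactness-based threshold $c(\epsilon)$ rather than a global class-$\mathcal{K}$ lower bound and relies on the fact that, once the trajectory enters $\mathcal{S}$, the ``outside'' decrease condition is no longer required. The only other delicate point is inferring $x(k)\to\mathcal{S}$ from $\Delta h(x(k))\to 0$, which is precisely why Definition~\ref{def:CBFcont} insists on compactness of $\mathcal{D}$ and continuity together with strict positivity of $\Delta h$ on $\mathcal{D}\setminus\mathcal{S}$.
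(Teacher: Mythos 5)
Your proof is correct and follows essentially the same route as the cited source \cite{wabersich2022} (the paper itself states this result without proof): forward invariance from the $U_2$ condition, $\epsilon$--$\delta$ stability via compactness-based sublevel-set trapping of the continuous $h$, and attractivity via telescoping the strict decrease $\Delta h$ and an accumulation-point argument, i.e.\ exactly the ``continuity-based arguments'' the paper attributes to that reference. Your handling of the asymmetry of $h$ inside $\mathcal{S}$ through the threshold $c(\epsilon)$ rather than class-$\mathcal{K}$ bounds is sound and consistent with that approach.
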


The second definition of a control barrier function which we consider is given in \cite[Appendix B.39]{rawlings2017} and makes use of class $\mathcal{K}$ comparison functions\footnote{A function $\alpha:[0,c]\rightarrow\mathcal{R}$ is class $\mathcal{K}$ if it is continuous, strictly increasing, $\alpha(0)=0$.}. 

\begin{definition}[Control barrier function - Class $\mathcal{K}$~\cite{rawlings2017}]
\label{def:CBFclK}
A function $h:\mathcal{D}\rightarrow \mathbb{R}$ is a discrete-time CBF with safe set $\mathcal{S} = \{x \in \mathbb{R}^n \mid h(x) \leq 0 \} \subset \mathcal{D}$
 if the following hold:\begin{enumerate}
    \item[1)] $\mathcal{S}, \mathcal{D}  $ are compact and non-empty sets,
    \item[2)] $\exists \; \alpha_1, \alpha_2\in\mathcal{K}$, such that $\alpha_1(\Vert x\Vert_\mathcal{S})\leq h(x)\leq \alpha_2(\Vert x\Vert_\mathcal{S})$,
    \item[3)] $\exists \; \alpha_3\in\mathcal{K}$ such that 
\end{enumerate}
    \begin{equation*}
        \inf_{u\in\mathcal{U}} \{h(f(x,u,0)) \mid f(x,u,0)\in\mathcal{D}\} - h(x) \leq - \alpha_3(\Vert x\Vert_\mathcal{S}).
    \end{equation*}
\end{definition}

\begin{remark}
    In \cite[Appendix B.39]{rawlings2017}, a function $h$ which fulfils Definition \ref{def:CBFclK} is denoted a control Lyapunov function, however in the scope of this paper we distinguish a control Lyapunov function as function which fulfils Definition~\ref{def:CBFclK} with $\mathcal{S}=\{0\}$ from a CBF with a general compact set $\mathcal{S}$. 
\end{remark}

A definition based on such comparison functions is advantageous due to the ease of extending stability guarantees if there are additional exogenous disturbances acting on the system in \eqref{eq:sys}. In Section~\ref{sec:Theory}, we will show that a CBF fulfilling Definition~\ref{def:CBFcont} implies existence of a CBF according to Definition~\ref{def:CBFclK}. A similar relation of a continuous Lyapunov decrease implying the existence of a class $\mathcal{K}$ decrease is shown in \cite{jiang2002}, but uses class $\mathcal{K}_\infty$ bounds on the original Lyapunov function rather than the continuity of $h$ in Definition~\ref{def:CBFcont}. The established relation between the two CBF definitions allows showing the input-to-state stability guarantees of the approximate predictive CBF approach proposed in this paper.

\section{PRELIMINARIES ON PREDICTIVE CONTROL BARRIER FUNCTIONS} \label{sec:PCBF}
In this section, we review the concept of predictive safety filters, as proposed originally in~\cite{wabersich2018} and adapted in~\cite{wabersich2022} for system recovery, and how they can be related to the concept of CBFs. At every time step, a controller proposes a potentially unsafe control action, denoted as $u_\mathrm{p}(k)$. Such a controller could be optimising a primary control objective, such as a robot operating in a cluttered environment or even a human operating a car. The aim of the safety filter is to certify whether the proposed input is safe to apply to the system in terms of the state and input constraints~\eqref{eq:constraints}, and modify it, if necessary. In order to guarantee constraint satisfaction for all times, the predictive safety filter makes use of model predictive control techniques. 
The predictive safety filter problem proposed in~\cite{wabersich2022} which is solved online at every time step is given as
\begin{subequations}
\label{eq:pcbf_sf}
\begin{align}
\min_{x_{i|k}, u_{i|k}} \; \; & \lvert \lvert{u_{0|k} - u_\mathrm{p}(k)}\rvert \rvert \\
\text{s.t.}\; \; &\forall\, i = 0,\dots,N-1, \nonumber \\
& x_{0|k} = x(k), \\
& x_{i+1|k} = f(x_{i|k}, u_{i|k}, 0), \label{eq:pcbf_dyn}\\
& u_{i|k} \in \mathcal{U}, \label{eq:pcbf_sf_inputconstraints} \\
& x_{i|k} \in \overline{\mathcal{X}}_i( \xi^*_{i|k}), \label{eq:pcbf_sf_stateconstraints} \\
&h_f(x_{N|k}) \leq  \xi^*_{N|k}, \label{eq:pcbf_sf_terminalconstraints}
\end{align}
\end{subequations}
Given the current state $x(k)$, a nominal trajectory of states $x_{i|k}$ and inputs $u_{i|k}$ is planned according to the system dynamics in~\eqref{eq:pcbf_dyn} such that the state and input constraints~\eqref{eq:pcbf_sf_inputconstraints} and~\eqref{eq:pcbf_sf_stateconstraints} are satisfied for all predicted time steps. We note that as the state can lie outside the set $\mathcal{X}$, the state constraints are formulated in terms of soft constraints with respect to slack variables $\xi^*_{i|k}$ which are computed according to a second optimisation problem discussed later in this section. The state constraints are subject to an additional tightening, i.e., $\overline{\mathcal{X}}_i(\xi_{i|k}):=\{x\in\mathbb{R}^n \mid c_x(x) \leq -\Delta_i + \xi_{i|k}\}$ with $\Delta_i>\Delta_{i-1}>\Delta_0=0$, in order to establish stability guarantees of the closed-loop. A terminal constraint~\eqref{eq:pcbf_sf_terminalconstraints} is imposed on the final predicted state, where $h_f:\mathcal{D}_f \rightarrow \mathbb{R}$ is a CBF according to Definition~\ref{def:CBFcont} and is assumed to be known.
\begin{assumption}[Terminal control barrier function]~ \label{ass:terminalCBF}
 A terminal CBF $h_f:\mathcal{D}_f \rightarrow \mathbb{R}$ with domain $\mathcal{D}_f:=\{x\in\mathbb{R}^n \mid h_f(x) \leq \gamma_f \}$ for $\gamma_f>0$ and corresponding safe set $\mathcal{S}_f:=\{x\in\mathbb{R}^n \mid h_f(x) \leq 0 \}\subseteq \overline{\mathcal{X}}_{N-1}(0)$ exists according to Definition~\ref{def:CBFcont} for the dynamics~\eqref{eq:sys} and is known. 
\end{assumption} 

A number of design procedures exist to design such a CBF, including Lyapunov-based methods such as semi-definite programming for quadratic functions, e.g., in \cite{vanantwerp2000}, which can also be applied to nonlinear systems as discussed, e.g., in \cite[Section IV]{wabersich2022}, and sum-of-squares optimisation for polynomial systems and Lyapunov functions in, e.g., \cite{jarvis2003}. While obtaining a control barrier function for nonlinear systems is a challenging problem in general, we note that a continuous, local Lyapunov function around an equilibrium point is sufficient to obtain $h_f$ satisfying Definition~\ref{def:CBFcont} as discussed in \cite[Section IV]{wabersich2022} and conservativeness introduced by small terminal sets $\mathcal{S}_f$ can be counteracted by increasing the prediction horizon $N$ in the optimisation problem \eqref{eq:pcbf_sf}.

In order to be minimally invasive, given the proposed performance input $u_\mathrm{p}(k)$ at any given time step, the safety filter problem minimises the distance of the first predicted input $u_{0|k}$ and $u_\mathrm{p}(k)$. Trivially, if the proposed performance input $u_\mathrm{p}(k)$ allows constructing a safe backup trajectory, it is also the optimal solution to the safety filter problem. At every time step, after obtaining a solution to \eqref{eq:pcbf_sf}, the input $u(k)=u^*_{0|k}$ is applied to the system.

Due to the recursive feasibility property shown, e.g., in \cite{wabersich2022}, it holds that the predictive safety filter \eqref{eq:pcbf_sf} implicitly defines an invariant subset of the state constraints, i.e. when $\xi^*_{i|k}=0$. Any state which can reach terminal invariant set within the prediction horizon such that state and input constraints are satisfied is safe.
Thereby, the predictive safety filter can be seen as an enhancement mechanism for invariant sets using model predictive control techniques. Trivially, any state within the invariant set $\mathcal{S}_f\coloneqq\{x\in\mathbb{R}^n\mid h_f(x)\leq 0\}$ is a feasible point in the optimisation problem $\eqref{eq:pcbf_sf}$, which implies that the set of feasible points with $\xi^*_{i|k}=0$ is a superset of $\mathcal{S}_f$, for any given prediction horizon $N \geq 1$.
Such a recovery mechanism is achieved in~\cite{wabersich2022} by computing the slack variables $\xi^*_{i|k}$ in~\eqref{eq:pcbf_sf} through the optimisation problem

Building upon the introduced predictive safety filter formulation, it is shown in \cite{wabersich2022}, that a similar enhancement mechanism can be formulated for CBFs to increase the domain of attraction of the closed-loop system. As in practical applications, the initial state $x(0)$ may not be feasible without soft constraints or unmodelled external disturbances could result in infeasible states for the optimisation problem, a recovery mechanism is required, which ensures that the system converges back to an invariant subset of the constraints $\mathcal{X}$ in such events. Through their property of stability of invariant sets in a given domain, CBFs offer a theoretical framework which is suitable for such a recovery mechanism. Such an enhancement property for the domain of attraction of the closed-loop system is achieved by computing the slack variables $\xi^*_{i|k}$ in \eqref{eq:pcbf_sf} according to the following optimisation problem, which minimises the state constraint violations and a weighted penalty on the terminal slack $\xi_{N|k}$ with $\alpha_f>0$, i.e.
\begin{subequations}
\label{eq:pcbf}
\begin{align}
 \hspace{-0.1cm}\hpb(x(k)) := \min_{x_{i|k},u_{i|k}, \xi_{i|k}} \; \; & \alpha_f \xi_{N|k} + \sum_{i=0}^{N-1}\lvert \lvert{\xi_{i|k}}\rvert \rvert \label{eq:cbf_pcbf} \\
\text{s.t.}\; \; &\forall\, i = 0,\dots,N-1, \nonumber \\
& x_{0|k} = x(k), \label{eq:slack_init_pred} \\
& x_{i+1|k} = f(x_{i|k}, u_{i|k}, 0), \\
& u_{i|k} \in \mathcal{U}, \label{eq:slack_input_constraint}\\
& x_{i|k} \in \overline{\mathcal{X}}_i(\xi_{i|k}), \; 0 	\leq \xi_{i|k}, \label{eq:pcbf_state_constraint}\\
& h_f(x_{N|k}) \leq \xi_{N|k}, \; 0 \leq \xi_{N|k}.
\end{align}
\end{subequations}
In~\cite{wabersich2022}, it is shown that the optimal value function of~\eqref{eq:pcbf}, i.e., $\hpb(x)$, is a CBF according to Definition~\ref{def:CBFcont} if $\alpha_f$ is chosen sufficiently large.
\begin{theorem}[\cite{wabersich2022}]\label{thrm:PCBF}
    Let Assumptions~\ref{ass:cont}~and~\ref{ass:terminalCBF} hold. The minimum \eqref{eq:pcbf} exists and if $\alpha_f$ is chosen sufficiently large, then the optimal value function $\hpb(x(k))$ defined in \eqref{eq:cbf_pcbf} is a (predictive) control barrier function according to Definition~\ref{def:CBFcont} with domain $\Dpb := \{ x \in \mathbb{R}^n \mid \hpb(x) \leq \alpha_f \gamma_f \}$ and safe set $\Spb := \{ x \in \mathbb{R}^n \mid \hpb(x) = 0  \}$.
\end{theorem}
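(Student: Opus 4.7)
The plan is to verify the three items of Definition~\ref{def:CBFcont} one by one, using a combination of parametric optimisation arguments for continuity, direct inspection for the sublevel-set structure, and a shifted candidate construction for the decrease condition.

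First, I would establish existence, continuity, and the basic set properties. Existence of the minimum in~\eqref{eq:pcbf} follows from Weierstrass: the feasible set is non-empty (given any admissible input sequence, large enough slacks satisfy all constraints), closed, and by Assumption~\ref{ass:cont} together with continuity of $f$ it is compact on any bounded $x$. Continuity of $\hpb$ on $\Dpb$ then follows from a parametric programming result of Berge-type, since the constraint set-valued map is continuous in $x$ and the objective is continuous in $(u,\xi)$. Once $\hpb$ is continuous, $\Spb$ and $\Dpb$ are closed as sublevel sets of a continuous function, and their boundedness follows from the $\Delta_i$-tightening together with compactness of $\mathcal{X}(\xi)$ for finite $\xi$. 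Non-emptyness of $\Spb$ is the key structural observation: for any $x\in\mathcal{S}_f$, the terminal CBF property~\eqref{eq:inside_cbf_condition} provides an input keeping the state in $\mathcal{S}_f\subseteq\overline{\mathcal{X}}_{N-1}(0)\subseteq\overline{\mathcal{X}}_i(0)$, so all slacks can be chosen zero, yielding $\hpb(x)=0$ and $\mathcal{S}_f\subseteq\Spb$.

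The core of the proof is the CBF decrease condition. At the successor state $x^+=f(x,u^*_{0|k},0)$, I would construct a feasible candidate for~\eqref{eq:pcbf} by a time-shift: set $\tilde u_{i|k+1}=u^*_{i+1|k}$ for $i=0,\dots,N-2$, and take $\tilde u_{N-1|k+1}$ to be an input provided by the terminal CBF conditions of Definition~\ref{def:CBFcont} applied to $h_f$ at $x^*_{N|k}$. The candidate slacks $\tilde\xi_{i|k+1}$ are chosen as the smallest non-negative values compatible with the shifted trajectory; the ordering $\Delta_i>\Delta_{i-1}$ of the tightening is exactly what makes it possible to absorb residual slack when shifting the $i$-th step onto the $(i{-}1)$-th constraint. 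The terminal piece is then bounded using property~\eqref{eq:outside_cbf_condition}/\eqref{eq:inside_cbf_condition} for $h_f$, which allows decreasing the terminal slack whenever $x^*_{N|k}\in\mathcal{D}_f\setminus\mathcal{S}_f$ and keeping it zero otherwise. Evaluating the objective on this candidate upper bounds $\hpb(x^+)$ and yields a decrease of the form $\hpb(x^+)-\hpb(x)\leq -\lVert\xi^*_{0|k}\rVert - \alpha_f\Delta h_f(x^*_{N-1|k}) + \text{(cross terms)}$, and it is precisely to dominate the cross terms by the terminal decrease that $\alpha_f$ must be chosen sufficiently large.

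Finally, I would define $\Delta h(x)$ as the minimum, over all $u\in\mathcal{U}$ with $f(x,u,0)\in\Dpb$, of the positive quantity obtained above. Positivity on $\Dpb\setminus\Spb$ follows because any $x\notin\Spb$ has either $\xi^*_{0|k}>0$ or $x^*_{N|k}\notin\mathcal{S}_f$, both of which give a strictly positive contribution. Continuity of $\Delta h$ is the step I expect to be the main obstacle: it requires continuity of the optimisers (or at least of the optimal slack and terminal-state trajectories) with respect to $x$, which in the presence of potentially multi-valued minimisers is delicate. I would handle this by lower-semicontinuous/upper-semicontinuous arguments on the set-valued optimiser map, noting that only the optimal value and the terminal slack $\xi^*_{N|k}$ enter $\Delta h$, and these are continuous in $x$ by the same parametric arguments used for $\hpb$ — if necessary, $\Delta h$ can be replaced by a continuous lower bound that still satisfies the required strict positivity, which is sufficient for Definition~\ref{def:CBFcont}.
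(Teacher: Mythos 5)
You should first note that the paper does not actually prove Theorem~\ref{thrm:PCBF}: it is imported from \cite{wabersich2022}, and the only in-paper justification is the sentence following the statement, which attributes invariance of $\Spb$ to recursive feasibility with $\xi^*_{i|k}=0$, and the positive decrease to the terminal CBF $h_f$ when $\xi^*_{N|k}>0$ and to the tightening $\Delta_i$ when $\xi^*_{N|k}=0$. Your overall strategy (Weierstrass existence, continuity by parametric arguments, $\mathcal{S}_f\subseteq\Spb$, a time-shifted candidate closed with a terminal-CBF input, and $\alpha_f$ chosen large to dominate the new stage-$(N-1)$ slack incurred at $x^*_{N|k}$) matches that outline and the cited proof's structure.

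There are, however, two concrete gaps. First, your positivity argument for $\Delta h$ rests on a false dichotomy: $x\notin\Spb$ does \emph{not} imply $\xi^*_{0|k}>0$ or $x^*_{N|k}\notin\mathcal{S}_f$. It can happen that $\xi^*_{0|k}=0$ and $\xi^*_{N|k}=0$ while some intermediate $\xi^*_{i|k}>0$ (the current state satisfies the constraints and $\mathcal{S}_f$ is reachable, but only through an intermediate tightened-constraint violation). In exactly this case the strict decrease must come from the tightening margins, i.e. the shifted slack $\max\{0,\,\xi^*_{i|k}-(\Delta_i-\Delta_{i-1})\}<\xi^*_{i|k}$ --- the mechanism you mention when building the candidate but then omit from both your displayed decrease bound and your positivity claim; as written, your $\Delta h$ could vanish at such points, which is precisely the case the paper's remark assigns to the tightening. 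Second, you apply the terminal CBF conditions of Definition~\ref{def:CBFcont} to $h_f$ at $x^*_{N|k}$ without verifying $x^*_{N|k}\in\mathcal{D}_f$; outside its domain $h_f$ provides no admissible input or decrease. This is exactly where the domain $\Dpb=\{x\mid \hpb(x)\leq\alpha_f\gamma_f\}$ enters: $\hpb(x)\leq\alpha_f\gamma_f$ yields $\xi^*_{N|k}\leq\gamma_f$, hence $h_f(x^*_{N|k})\leq\gamma_f$ and $x^*_{N|k}\in\mathcal{D}_f$, so your construction of $\tilde u_{N-1|k+1}$ is only justified on $\Dpb$, and without this step the claim that the result holds with this particular domain is not established. (Minor points: the terminal decrease should be evaluated at $x^*_{N|k}$, not $x^*_{N-1|k}$, and the continuity of $\Delta h$, which you flag yourself, is indeed the delicate part since optimal slacks need not depend continuously on $x$; a continuous positive lower bound would suffice for Definition~\ref{def:CBFcont}, but constructing it is the substance of the cited proof rather than a detail.)
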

Invariance of the safe set $\Spb$ in Theorem~\ref{thrm:PCBF} results from recursive feasibility of the problem with $\xi^*_{i|k}=0$ from standard arguments, while the existence of a continuous, positive decrease $\Delta \hpb$ in $\Dpb\setminus \Spb$ is ensured through the terminal CBF $h_f$ in Assumption~\ref{ass:terminalCBF} if $\xi^*_{N|k}>0$ and through the additional tightening $\Delta_i$ of the state constraints $\overline{\mathcal{X}}_i(\xi_{i|k})$ if $\xi^*_{N|k}=0$. Clearly, if $\xi^*_{i|k}=0$, the problem \eqref{eq:pcbf_sf} is solved with additional tightening on the state constraints and the corresponding tightenings $\Delta_i$ should therefore be chosen small.
An illustration of the different set definitions can be found in~\cite[Figure 4]{wabersich2022}.

\section{EXPLICIT APPROXIMATION OF THE PCBF}~\label{sec:APCBF}
In this section, we present the main proposed approximation-based algorithm. Furthermore, we propose a continuous extension for the PCBF, which is guaranteed to be negative for states in the safe set $\Spb$, i.e., when $\hpb(x)=0$, thereby reducing safety filter interventions caused by approximation errors within $\Spb$.

\subsection{Approximate PCBF} \label{sec:apcbf}
The optimal value function in~\eqref{eq:pcbf} is a CBF according to Definition~\ref{def:CBFclK} and encodes the information of the entire prediction horizon of the predictive optimisation problem with respect to potential constraint violations. However, an explicit formulation of $\hpb$ is generally not available. We therefore propose approximating the PCBF value function with a continuous regressor $\ahpb:\mathbb{R}^n\rightarrow\mathbb{R}$ such that
\begin{equation}
    \ahpb(x)\approx\hpb(x) \;\forall x\in\Dpb.
\end{equation} 
Due to the continuity property of the PCBF in~\eqref{eq:pcbf} and the fact that it can be sampled offline, the function is suitable for approximation, such that a computationally efficient algorithm can be deployed. As the approximation $\ahpb$ will be embedded in an optimisation problem, we consider neural networks in this paper due to the parametric nature of the regressor, allowing for a formulation independent of the number of training samples, compared to, e.g., kernel-based methods. 
Note that while in principle a safety filter policy $\pi(x,u_\mathrm{p})$ could be learned, this policy is generally not only non-continuous and multi-valued, but also has a higher input dimensionality for the data collection and training as $\pi:\mathbb{R}^n\times\mathbb{R}^m\rightarrow\mathbb{R}^m$. As an illustrative example, consider a car approaching an obstacle. If the performance input is given to steer straight towards the obstacle, an optimal solution for the safety filter problem is given by either steering fully to the left or to the right.

In order to train the regressor for the PCBF approximation, a dataset of samples has to be collected offline in the form $\mathbb{D}=\{(x_i, \hpb(x_i))\}_{i=0}^{n_d}$. This data set can be sampled from the entire region of attraction $\Dpb=\{x\in\mathbb{R}^n \mid \hpb(x) \leq \alpha_f\gamma_f \}$, where $\gamma_f$ is the sublevel set of the domain of the terminal CBF $\mathcal{D}_f$ and $\alpha_f$ is the terminal penalty in~\eqref{eq:cbf_pcbf}, or from a subset deemed relevant to the specific application. 
In the Appendix, we propose a geometric sampling method based on \cite{chen2022}, which allows efficiently sampling the PCBF by performing a pseudo random walk over a preselected domain, using previous solutions as warmstarts to the next computed optimisation problem. 

Given a trained explicit approximation $\ahpb$ of $\hpb$, 
the corresponding CBF decrease condition~\eqref{eq:outside_cbf_condition} can directly be used as a constraint similar to CBF-based safety filters in~\cite{agrawal2017}. The optimisation problem which is solved at every time step is then given by
    \begin{align}\label{eq:apcbf_sf}
        \min_u & \Vert u - u_\mathrm{p}(k) \Vert \\
        \textup{s.t. } & \ahpb(f(x(k),u,0)) \!-\! \ahpb(x(k)) \leq - c_\Delta \Delta\ahpb(x(k)), \nonumber \\
        & u\in\mathcal{U}, \nonumber
    \end{align}
where $\Delta\ahpb(x):\mathbb{R}^n\rightarrow\mathbb{R}$ is a decrease function of the approximate PCBF $\ahpb(x)$ and $c_\Delta$ is a user-defined hyperparameter which is 1 if $\Delta \ahpb(x)< 0$ and can be chosen as $c_\Delta\in(0,1]$ otherwise. 
The main advantage of the proposed optimisation problem compared to the PCBF algorithm in~\cite{wabersich2022}, which solves~\eqref{eq:pcbf} and~\eqref{eq:pcbf_sf} consecutively, is that the formulation in~\eqref{eq:apcbf_sf} is independent of the prediction horizon of the predictive control-based optimisation problems. Instead, only a single decrease constraint needs to be satisfied along with the input constraints and the only optimisation variable is $u\in\mathbb{R}^m$. 

Given that the decrease function of $\hpb(x)$, i.e., $\Delta \hpb(x)$, is not explicitly available in a similar fashion to $\hpb(x)$ itself, we propose an optimisation-based formulation for $\Delta\ahpb$. The approximate decrease which is used in~\eqref{eq:apcbf_sf} is defined as
\begin{equation}\label{eq:apcbf_maxdec}
   \Delta \ahpb(x):= \max_{u\in\mathcal{U}} \left(  \ahpb(x) - \ahpb(f(x,u, 0))\right)
\end{equation}
From this formulation of the approximate decrease function, it follows that choosing the hyperparameter $c_\Delta=1$, requires the approximate PCBF $\ahpb$ to undergo the maximal possible decrease. By selecting $c_\Delta\in (0,1)$ when a decrease can be achieved, safety interventions of the proposed filter can be reduced by relaxing the amount that the approximate PCBF is required to decrease. This hyperparameter can be tuned after obtaining $\ahpb$ for desired closed-loop behaviour and provides an additional degree of freedom compared to the the algorithms in \cite{wabersich2022} and \cite{didier2023}, where at every time step the maximal decrease is required to be achieved.

\begin{remark}
    In principle, a decrease function $\Delta \ahpb(x)$ can also be hand-tuned as is common in the continuous-time CBF literature, such as, e.g. choosing an exponential decrease $\Delta \ahpb(x)=\alpha(\hpb(x))$ for some class $\mathcal{K}$ function $\alpha$, see, e.g., \cite{ames2019}. Such a hand-tuned decrease function reduces the computational demand of the proposed scheme by only requiring the solution to one optimisation problem. However, in this case recursive feasibility, which will be shown later in this section for the proposed maximal decrease can no longer be ensured.
\end{remark}

An additional advantage of the explicit approximation of the PCBF is given by the fact that proposed inputs $u_\mathrm{p}(k)$ can be very efficiently certified in terms of satisfying optimality with respect to~\eqref{eq:apcbf_sf} if $u_{\mathrm{p}}(k)\in\mathcal{U}$.
It suffices to perform a simple evaluation, whether $\ahpb(f(x(k),u_\mathrm{p}(k),0))\leq 0$, as it then holds that $u_\mathrm{p}(k)$ is an optimal solution to~\eqref{eq:apcbf_sf}. This function evaluation can thereby reduce the need for solving two optimisation problems online at every time step. Additionally, it can avoid local minima when solving optimisation problems online, especially compared to predictive optimisation problems. The main proposed algorithm is then given in Algorithm~\ref{alg:apcbf}, where first, the proposed input is certified and if the input is not deemed to be safe, the maximal decrease in~\eqref{eq:apcbf_maxdec} is computed and then used in the safety filter~\eqref{eq:apcbf_sf}.

\begin{algorithm}[!b]
\caption{Approximate PCBF safety filter}\label{alg:apcbf}
\begin{algorithmic}[1]
\FOR{$k\geq 0$}
\STATE Measure state $x(k)$
\STATE Obtain the nominal control input $u_\mathrm{p}(k)$
\IF{$u_{\mathrm{p}}(k)\in\mathcal{U} \wedge \ahpb(f(x(k),u_\mathrm{p}(k)),0)\leq 0$}
    \STATE Apply $u(k)\leftarrow u_\mathrm{p}(k)$ \hfill // nominal input is safe
\ELSE 
    \STATE Compute $\Delta \ahpb(x(k))$ using~\eqref{eq:apcbf_maxdec}
    \STATE Solve~\eqref{eq:apcbf_sf} and obtain a solution $u^*$
    \STATE Apply $u(k)\leftarrow u^*$ \hfill // apply filtered input
\ENDIF
\ENDFOR
\end{algorithmic}
\end{algorithm}

\subsection{Continuous Extension of the PCBF} \label{sec:context}
As the PCBF $\hpb(x)=0$ for all $x\in\Spb$, errors in its approximation can lead to undesired conservativeness within $\Spb$ if $\ahpb(x)>0$, as potentially safe control inputs can be deemed unsafe. In order to avoid such unnecessary safety filter interventions, we propose a continuous extension of $\hpb$ in $\Spb$, which is negative. By sampling this continuous extension within $\Spb$ rather than $\hpb$, the data used for training is guaranteed to be negative rather than $0$ and therefore mitigates potential learning errors leading to states being considered unsafe. Consider the function
\begin{equation} \label{eq:PCBFext}
    h_{\mathrm{CE}}(x):=\begin{cases}
    h_\mathrm{PB}(x) & \textup{ if } x\in\Dpb\setminus \Spb, \\
    \bar{h}_{\mathrm{PB}}(x) & \textup{ if } x\in\Spb,
    \end{cases}
\end{equation}
with $h_{\mathrm{CE}}:\Dpb\rightarrow\mathbb{R}$ and where we define 
\begin{subequations}\label{eq:optprobwithnegativeslacks}
\label{eq:pcbfmin}
\begin{align}
\bar{h}_{\mathrm{PB}}(x(k)):=\min_{x_{i|k}, u_{i|k}, \xi_{i|k}} \; \; &\max_{i,j}\{[\xi_{i|k}]_j,\xi_{N|k}\} \label{eq:cbf_pcbf_min} \\
\text{s.t.}\; \; &\forall\, i = 0,\dots,N-1, \nonumber \\
& x_{0|k} = x(k), \label{eq:slack_init_pred_min} \\
& x_{i+1|k} = f(x_{i|k}, u_{i|k}, 0), \\
& u_{i|k} \in \mathcal{U}, \label{eq:slack_input_constraint_min}\\
& x_{i|k} \in \overline{\mathcal{X}}_i(\xi_{i|k}),\label{eq:pcbf_state_constraint_min}\\
& h_f(x_{N|k}) \leq \xi_{N|k}.\label{eq:pcbf_state_constraint_min_term}
\end{align}
\end{subequations}
Note that the slacks in~\eqref{eq:pcbf_state_constraint_min} and~\eqref{eq:pcbf_state_constraint_min_term} are not required to be positive, implying negativity of $\bar{h}_{\mathrm{PB}}$ when the state and terminal constraints are satisfied. This formulation can be solved using an epigraph transformation and shows close resemblance to discrete-time Hamilton-Jacobi value functions such as~\cite{fisac2019}. We show that the function $h_{\mathrm{CE}}$ is continuous on $\Dpb$ and a CBF according to Definition~\ref{def:CBFcont}.
\begin{theorem}\label{thm:continuity}
It holds that $h_{\mathrm{CE}}$ as defined in~\eqref{eq:PCBFext} is a CBF according to Definition~\ref{def:CBFcont}.
\end{theorem}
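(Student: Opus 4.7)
The plan is to verify the three requirements of Definition~\ref{def:CBFcont} for $h_{\mathrm{CE}}$ on the domain $\mathcal{D}_{\mathrm{PB}}$. First I would identify the zero-sublevel set. For any $x \in \mathcal{S}_{\mathrm{PB}}$ the problem~\eqref{eq:pcbf} admits a feasible trajectory with all slacks equal to zero; the same trajectory is feasible for~\eqref{eq:pcbfmin} and gives objective value zero, so $\bar{h}_{\mathrm{PB}}(x) \leq 0$. Outside $\mathcal{S}_{\mathrm{PB}}$ we have $h_{\mathrm{CE}} = h_{\mathrm{PB}} > 0$ by Theorem~\ref{thrm:PCBF}. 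Hence $\{x \in \mathcal{D}_{\mathrm{PB}} \mid h_{\mathrm{CE}}(x) \leq 0\} = \mathcal{S}_{\mathrm{PB}}$, which together with $\mathcal{D}_{\mathrm{PB}}$ is compact and non-empty, settling condition 1.

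For condition 2, continuity of $h_{\mathrm{CE}}$ on $\mathcal{D}_{\mathrm{PB}} \setminus \mathcal{S}_{\mathrm{PB}}$ is inherited from $h_{\mathrm{PB}}$ via Theorem~\ref{thrm:PCBF}. On $\mathcal{S}_{\mathrm{PB}}$, I would recast~\eqref{eq:pcbfmin} through its epigraph into a standard parametric optimisation with continuous objective and constraint functions $f$, $c_x$, $h_f$, compact feasible set by Assumption~\ref{ass:cont}, and a feasible-set correspondence that is both upper and lower hemicontinuous in $x$. Berge's maximum theorem then yields continuity of $\bar{h}_{\mathrm{PB}}$ on $\mathcal{S}_{\mathrm{PB}}$. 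The crucial step is gluing the two branches continuously across $\partial \mathcal{S}_{\mathrm{PB}}$: I would show $\bar{h}_{\mathrm{PB}}(x^*) = 0$ at every $x^* \in \partial \mathcal{S}_{\mathrm{PB}}$. Suppose for contradiction that $\bar{h}_{\mathrm{PB}}(x^*) = -\epsilon < 0$; then there exists an input sequence producing a predicted trajectory whose state residuals $c_x(x_{i|k}) + \Delta_i$ and terminal residual $h_f(x_{N|k})$ are all bounded above by $-\epsilon$. By continuity of $f$, $c_x$ and $h_f$, applying the same inputs at any $x'$ in a sufficiently small neighbourhood of $x^*$ keeps all residuals strictly negative, so the zero-slack lift is feasible for~\eqref{eq:pcbf} and $h_{\mathrm{PB}}(x') = 0$. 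This would place an entire neighbourhood of $x^*$ inside $\mathcal{S}_{\mathrm{PB}}$, contradicting $x^* \in \partial \mathcal{S}_{\mathrm{PB}}$. Hence $\bar{h}_{\mathrm{PB}} = h_{\mathrm{PB}} = 0$ on the boundary and $h_{\mathrm{CE}}$ is continuous on $\mathcal{D}_{\mathrm{PB}}$.

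Finally, for condition 3 I would take the decrease function $\Delta h_{\mathrm{PB}}$ supplied by Theorem~\ref{thrm:PCBF}, extended by zero on $\mathcal{S}_{\mathrm{PB}}$. This extension is continuous on $\mathcal{D}_{\mathrm{PB}}$ and strictly positive on $\mathcal{D}_{\mathrm{PB}} \setminus \mathcal{S}_{\mathrm{PB}}$. On $\mathcal{D}_{\mathrm{PB}} \setminus \mathcal{S}_{\mathrm{PB}}$, any input $u$ certifying the decrease of $h_{\mathrm{PB}}$ also certifies the decrease of $h_{\mathrm{CE}}$ since $h_{\mathrm{CE}} \leq h_{\mathrm{PB}}$ everywhere (equality outside $\mathcal{S}_{\mathrm{PB}}$, $\bar{h}_{\mathrm{PB}} \leq 0$ inside). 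For the inside condition~\eqref{eq:inside_cbf_condition}, control invariance of $\mathcal{S}_{\mathrm{PB}}$ from Theorem~\ref{thrm:PCBF} supplies $u \in \mathcal{U}$ with $f(x,u,0) \in \mathcal{S}_{\mathrm{PB}}$, hence $h_{\mathrm{CE}}(f(x,u,0)) = \bar{h}_{\mathrm{PB}}(f(x,u,0)) \leq 0$.

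The anticipated main obstacle is making the continuity argument rigorous at $\partial \mathcal{S}_{\mathrm{PB}}$, where the definition of $h_{\mathrm{CE}}$ switches between two different optimisation problems. The perturbation argument above, which exploits that strictly negative slack values form an open condition while the boundary of $\mathcal{S}_{\mathrm{PB}}$ does not admit such slack, is the key technical step; everything else is routine parametric optimisation combined with the properties already established for $h_{\mathrm{PB}}$.
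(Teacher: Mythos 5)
Your proposal follows essentially the same route as the paper's proof: continuity of $h_{\mathrm{CE}}$ is split into continuity of $\hpb$ on $\Dpb\setminus\Spb$, continuity of $\bar{h}_{\mathrm{PB}}$ on $\Spb$, and the matching value $\bar{h}_{\mathrm{PB}}(x)=\hpb(x)=0$ on $\partial\Spb$; your boundary step --- strictly negative residuals are an open condition, so a strictly negative value of $\bar{h}_{\mathrm{PB}}$ at a boundary point would force an entire neighbourhood into $\Spb$ --- is exactly the paper's contradiction argument. The one genuinely different sub-step is the continuity of $\bar{h}_{\mathrm{PB}}$ on $\Spb$: the paper argues directly, inserting the optimal input sequence at $x$ as a candidate at a nearby $\bar{x}$ and using uniform continuity of $f$, $c_x$, $h_f$ on compact sets, whereas you invoke Berge's maximum theorem. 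That works, but be aware that with the slacks kept as decision variables the feasible set of~\eqref{eq:optprobwithnegativeslacks} is unbounded; you should first eliminate them (as your epigraph remark suggests), writing $\bar{h}_{\mathrm{PB}}(x)=\min_{u_{0|k},\dots,u_{N-1|k}\in\,\mathcal{U}}\max_{i,j}\bigl\{[c_x(x_{i|k})]_j+\Delta_i,\,h_f(x_{N|k})\bigr\}$, after which the constraint set is a fixed compact set and the hemicontinuity requirements are immediate. You are also more explicit than the paper about conditions 1) and 3) of Definition~\ref{def:CBFcont}, which the paper dispatches in one sentence using $h_{\mathrm{CE}}\leq\hpb$ and invariance of $\Spb$, exactly as you do.

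One flaw to fix in condition 3): taking the decrease function to be ``$\Delta \hpb$ extended by zero on $\Spb$'' is both unnecessary and unjustified. Nothing guarantees that $\Delta\hpb$ vanishes on $\partial\Spb$, so this redefinition can introduce a discontinuity at the boundary, violating the requirement that $\Delta h$ be continuous on $\mathcal{D}$. The remedy is simply to reuse $\Delta\hpb$ unchanged: Definition~\ref{def:CBFcont} already provides it as a continuous function on all of $\Dpb$, and its values inside $\Spb$ play no role in~\eqref{eq:rate_cond_cbf}. With that, your observation that $h_{\mathrm{CE}}\leq\hpb$ on $\Dpb$ with equality on $\Dpb\setminus\Spb$ gives~\eqref{eq:outside_cbf_condition}, and control invariance of $\Spb$ together with $\bar{h}_{\mathrm{PB}}\leq 0$ on $\Spb$ gives~\eqref{eq:inside_cbf_condition}, completing the argument as in the paper.
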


\begin{proof}
We first prove continuity of $h_{\mathrm{CE}}$ through continuity of $\hpb$ on $\Dpb\setminus\interior\Spb$, continuity of $\bar{h}_{\mathrm{PB}}$ on $\Spb$ and that $\hpb(x)=\bar{h}_\mathrm{PB}(x)=0$ for all $x\in\partial\Spb$.

\textit{1) Continuity of $\hpb$ for all $x\in\Dpb\setminus\interior\Spb$:} In~\cite[Theorem III.6 b)]{wabersich2022}, it is shown that $\hpb$ is continuous for all $x\in\Dpb$, which implies continuity of $h$ for all $x\in\Dpb\setminus\Spb$.

\textit{2) Continuity of $\bar{h}_{\mathrm{PB}}$ for all $x\in\Spb$:} We can show continuity of $\bar{h}_{\mathrm{PB}}$ by using similar arguments as in \cite[Theorem III.6 b)]{wabersich2022}. We show that for any $\epsilon>0$, there exists $\delta>0$ such that $\Vert x-\bar{x}\Vert < \delta$ implies $|\bar{h}_{\mathrm{PB}}(x)-\bar{h}_{\mathrm{PB}}(\bar{x})|<\epsilon$. Consider the optimal solution of \eqref{eq:optprobwithnegativeslacks} at $x$, i.e. $u_{i|k}^*(x)$ as a candidate solution in $\bar{x}$, i.e. $\bar{u}_{i|k}(\bar{x})=u^*_{i|k}$, which results in the state predictions $\bar{x}_{i|k}(\bar{x})$ with $\bar{x}_{0|k}=\bar{x}$. The resulting slacks are then given by $\bar{\xi}_{i|k}(\bar{x})=\min\{0,c(\bar{x}_{i|k})+\Delta_i\mathbf{1}\}$ and $\bar{\xi}_{N|k}=\min\{0,h_f(\bar{x}_{N|k}(\bar{x}))\}$. As the initial state and the corresponding inputs lie in a compact set, i.e. $\bar{x}\in\mathcal{S}_\mathrm{PB}$ and $\bar{u}_{i|k}(\bar{x})\in\mathcal{U}$, and the functions $f,c_j$ and $h_f$ are continuous, it holds that the candidate solution also lies within a compact set from \cite[Lemma C.3]{wabersich2022} and from the Heine-Cantor theorem, it follows that $f,c_j$ and $h_f$ are uniformly continuous. 
As compositions of functions, the minimum and additions of uniform continuous functions all preserve uniform continuity, the state predictions, the candidate slack variables and the corresponding constructed candidate value function $\tilde{h}_\mathrm{PB}(\bar{x})=\max_{i,j}\{[\bar{\xi}_{i|k}(\bar{x})]_j,\bar{\xi}_{N|k}(\bar{x})\}$ are uniformly continuous in the initial condition $\bar{x}$. From the fact that the optimal solution $\bar{h}_{\mathrm{PB}}(\bar{x})\leq \tilde{h}_{\mathrm{PB}}(\bar{x})$ due to the minimisation in \eqref{eq:optprobwithnegativeslacks}, it holds that for any $\epsilon>0$, there exists a uniform $\delta>0$ with $\Vert x-\bar{x}\Vert <\delta$, such that $\bar{h}_{\mathrm{PB}}(\bar{x})-\bar{h}_{\mathrm{PB}}(x)\leq \tilde{h}_{\mathrm{PB}}(\bar{x})-\bar{h}_{\mathrm{PB}}(x)<\epsilon$. The same arguments hold when using the optimal solution at $\bar{x}$ in $x$, i.e. $\bar{u}_{i|k}(x)=u_{i|k}^*(\bar{x})$, such that there exists $\delta>0$ with $\Vert x-\bar{x} \Vert <\delta$, which implies that $\bar{h}_{\mathrm{PB}}(x)-\bar{h}_{\mathrm{PB}}(\bar{x})\leq \tilde{h}_{\mathrm{PB}}(x)-\bar{h}_{\mathrm{PB}}(\bar{x})<\epsilon$, which in turn implies that for any $\epsilon>0$, there exists $\delta>0$ with $\Vert x-\bar{x} \Vert <\delta$, which implies that $|\bar{h}_{\mathrm{PB}}(x)-\bar{h}_{\mathrm{PB}}(\bar{x})|<\epsilon$. 

\textit{3) Equality $\bar{h}_{\mathrm{PB}}(x)=\hpb(x)$ for all $x\in\partial\Spb$}: By definition, we have $h_{\mathrm{PB}}(x)=0$ for all $x\in\partial\Spb$. 
In order to show that $\bar{h}_{\mathrm{PB}}(x)=0$ for all $x\in\partial\Spb$ we first consider an optimal input sequence $\tilde{u}_{i|k}^*(\tilde{x})$ of~\eqref{eq:optprobwithnegativeslacks} for any $\tilde{x}\in\partial\Spb$. Given this optimal input sequence, it holds that the corresponding slack variables $\tilde{\xi}_{i|k}(\tilde{x})$ and $\tilde{\xi}_{N|k}(\tilde{x})$ and the optimal value function $\bar{h}_{\mathrm{PB}}(\tilde{x})$ are continuous in $\tilde{x}$ due to continuity of $f$ and $c_x$.
Furthermore, it holds that the candidate slacks $\tilde{\xi}_{i|k}(\tilde{x})\leq 0$ and $\tilde{\xi}_{N|k}(\tilde{x})\leq 0$ as any optimal inputs $u^*_{i|k}(\tilde{x})$ of~\eqref{eq:pcbf} are feasible in~\eqref{eq:optprobwithnegativeslacks} and $\hpb(x)=0$ for all $x\in\partial\Spb\subseteq\Spb$. As the candidate slacks are all non-positive, it follows that $\bar{h}_{\mathrm{PB}}(x)\leq 0$ for all $x\in\partial\Spb$. We can then show that $\bar{h}_{\mathrm{PB}}(x)= 0$ for all $x\in\partial\Spb\subseteq\Spb$ by contradiction. Suppose that for any $\tilde{x}\in\partial\Spb$, it holds that there exists a solution with all slack variables $\xi_{i|k}(\tilde{x})<0$ and $\xi_{N|k}(\tilde{x})<0$ in~\eqref{eq:optprobwithnegativeslacks}. Then for the corresponding optimal input sequence $\tilde{u}_{i|k}(\tilde{x})$, it holds that $\xi_{i|k}(\tilde{x})$ and $\xi_{N|k}(\tilde{x})$ are continuous in $\tilde{x}$. The continuity of the slack variables in $\tilde{x}$ implies that there exists an open ball of radius $\delta>0$, i.e. $B_\mathcal{D}^\delta(\tilde{x})=\{x\in\mathcal{D}\mid\Vert x-\tilde{x}\Vert < \delta\}$, for which it still holds that $\xi_{i|k}(\tilde{x})<0$ and $\xi_{N|k}(\tilde{x})<0$. As the optimal input $\tilde{u}_{i|k}(\tilde{x})$ is a feasible input for~\eqref{eq:pcbf}, it implies that $\hpb(x)=0$ for all $x\in B_\mathcal{D}^\delta(\tilde{x})$, which implies that $B_\mathcal{D}^\delta(\tilde{x})\subseteq\Spb$. This fact however contradicts with the initial assumption that $\tilde{x}\in\partial\Spb$, as by definition any open neighbourhood of $\tilde{x}\in\partial\Spb$ must contain at least one point in $\Dpb\setminus\Spb$. By contradiction, it therefore holds that any optimal solution of~\eqref{eq:optprobwithnegativeslacks} has at least one slack $[\tilde{\xi}^*_{i|k}(\tilde{x})]_j$ or $\tilde{\xi}^*_{N|k}(\tilde{x})$ which is 0 and all slacks are non-positive, which implies that $\bar{h}_{\mathrm{PB}}(\tilde{x})=\max_{i,j}\{[\tilde{\xi}^*_{i|k}(\tilde{x})]_j, \tilde{\xi}^*_{N|k}(\tilde{x})\} =0$.

As the function $h_{\mathrm{CE}}$ is continuous on its subdomains $\Spb$ and $\Dpb\setminus\Spb$ and $\bar{h}_{\mathrm{PB}}(x)=\hpb(x)=0$ for all $x\in\partial\Spb$, it holds that $h_{\mathrm{CE}}$ is continuous on $\Dpb$. As $h(x)=\hpb(x)$ for all $x\in\Dpb\setminus \Spb$ and $h_{\mathrm{CE}}(x)\leq 0$ for all $x\in\Spb$, the conditions in Definition~\ref{def:CBFcont} are satisfied.
\end{proof}

\begin{remark}
Note that the function
\begin{equation}
h(x):=\begin{cases}
    h_\mathrm{PB}(x) & \textup{ if } x\in\Dpb\setminus \Spb, \\
    c_h\bar{h}_{\mathrm{PB}}(x) & \textup{ if } x\in\Spb.
    \end{cases}
\end{equation}
 is also a CBF according to Definition~\ref{def:CBFcont} for any constant $c_h\geq 0$ as continuity of $h$ is preserved. The design parameter $c_h$ allows for trading off a reduced impact of learning errors in $\Spb$ on the safety filter interventions and ease of approximation.
\end{remark}

In principle, this continuous extension requires solving two optimisation problems for each sample $x\in\Spb$, i.e., \eqref{eq:pcbf} to determine if $x\in\Spb$ or $x\in\Dpb\setminus \Spb$ and \eqref{eq:optprobwithnegativeslacks} in case $x\in\Spb$. However, a geometric sampler such as used in \cite{chen2022} can be used, which performs a line search over given seed and target points. By using such a geometric sampler, information of the previous solution can be used to solve either \eqref{eq:pcbf} or \eqref{eq:optprobwithnegativeslacks} based on which problem was previously solved. It follows that two optimisation problems only need to be solved if the samples on the line cross over from $\Spb$ into $\Dpb\setminus\Spb$ or vice-versa. Additionally, this geometric sampler benefits from being able to warmstart the solver using the previous optimal solution $\mathbf{o}=\{x^*_{i}, u^*_{i}, \xi^*_{i}, \bm{\nu}^*, \bm{\lambda}^*\}$, where $\bm{\nu}^*$ and $\bm{\lambda}^*$ denote the optimal dual variables. Given that the optimal solution is shown to be continuous in $x$ in \cite[Theorem III.6]{wabersich2022} and Theorem~\ref{thm:continuity}, the previously computed optimal solution will be close to the solution of the next iteration in the line search. This allows the computation time during sampling to be significantly reduced as shown in \cite{chen2022}. Finally, the random walk can be parallelised by performing line search for different goal points given a random starting seed. The resulting sampling algorithm is then given by Algorithm~\ref{alg:sampling} in the Appendix.

\section{MAIN THEORETICAL ANALYSIS}\label{sec:Theory}
In this section, we provide the main theoretical analysis that provides the foundation of the approximation algorithm in Section~\ref{sec:apcbf}. We first show that every CBF according to Definition~\ref{def:CBFcont} implies existence of a CBF according to Definition~\ref{def:CBFclK}. This result enables analysis of the closed-loop behaviour of the predictive CBF algorithm in the presence of approximation errors as well as exogenous disturbances.

\subsection{Relation of CBF Definitions}
In order to show the existence of a CBF according to Definition~\ref{def:CBFclK} given a CBF according to Definition~\ref{def:CBFcont}, we need to relate the continuous function $\hpb$ and its corresponding decrease $\Delta \hpb$ to class $\mathcal{K}$ comparison functions.
We first recall the class $\mathcal{K}$ lower bound  proposed in~\cite{didier2023}.

\begin{lemma}[reformulated from~\cite{didier2023}]
\label{lemma:clK_bound}~ \\
Consider the compact and non-empty sets $\mathcal{S}$ and $\mathcal{D}$, with $\mathcal{S}\subset\mathcal{D}$ and a continuous function $\gamma:\mathcal{D}\setminus\mathcal{S}\rightarrow \mathbb{R}$, with $\gamma(x)>0$ for all $x\in\mathcal{D}\setminus\mathcal{S}$. Then there exists a $\mathcal{K}$-function $\alpha:[0,\bar{r}] \rightarrow \mathbb{R}$ with $\bar{r}:=\max_{x\in\mathcal{D}} \Vert x \Vert_\mathcal{S}$ such that
\begin{equation}
    \gamma(x)\geq \alpha(\Vert x \Vert_\mathcal{S}) \; \forall x\in\mathcal{D}\setminus \mathcal{S}.
\end{equation}
\end{lemma}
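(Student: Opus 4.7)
The plan is to construct $\alpha$ in two stages: first produce a non-decreasing, positive (for $r>0$) pointwise lower bound for $\gamma$ expressed in terms of $\Vert x\Vert_{\mathcal{S}}$, and then smooth and tame it into a class $\mathcal{K}$ function.

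First, I would define $\hat\alpha(0):=0$ and, for $r\in(0,\bar r]$,
\[
\hat\alpha(r) \;:=\; \inf\bigl\{\gamma(x) \;:\; x\in\mathcal{D},\; \Vert x\Vert_{\mathcal{S}} \geq r\bigr\}.
\]
The set over which we minimise is the intersection of the compact set $\mathcal{D}$ with the closed set $\{x : \Vert x\Vert_{\mathcal{S}}\geq r\}$; crucially, for $r>0$ any such $x$ automatically lies in $\mathcal{D}\setminus\mathcal{S}$ because points of $\mathcal{S}$ have zero distance to $\mathcal{S}$. Hence the minimisation is over a non-empty compact subset of $\mathcal{D}\setminus\mathcal{S}$, so by continuity of $\gamma$ the infimum is attained at some $x_r^\star$, yielding $\hat\alpha(r)=\gamma(x_r^\star)>0$. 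Monotonicity $\hat\alpha(r_1)\leq \hat\alpha(r_2)$ for $r_1\leq r_2$ is immediate since the feasible set shrinks as $r$ grows. By construction $\gamma(x)\geq \hat\alpha(\Vert x\Vert_{\mathcal{S}})$ for every $x\in\mathcal{D}\setminus\mathcal{S}$.

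The non-trivial part is that $\hat\alpha$ is only known to be non-decreasing, not necessarily continuous or strictly increasing, so it is not yet a class $\mathcal{K}$ function. I would remove both defects at once by averaging,
\[
\alpha(r) \;:=\; \frac{1}{\bar r + 1}\int_0^r \hat\alpha(s)\,ds.
\]
Since $\hat\alpha$ is bounded and monotone, it is measurable, and its indefinite Lebesgue integral is absolutely continuous, giving continuity of $\alpha$ and $\alpha(0)=0$. Strict monotonicity follows because for $0\leq r_1<r_2\leq\bar r$ the monotone bound $\hat\alpha(s)\geq\hat\alpha\bigl((r_1+r_2)/2\bigr)>0$ on $[(r_1+r_2)/2,r_2]$ makes $\alpha(r_2)-\alpha(r_1)$ strictly positive. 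Using $\hat\alpha(s)\leq\hat\alpha(r)$ for $s\leq r$ gives $\alpha(r)\leq \frac{r}{\bar r+1}\hat\alpha(r)\leq \hat\alpha(r)$ on $[0,\bar r]$, so $\gamma(x)\geq\hat\alpha(\Vert x\Vert_{\mathcal{S}})\geq\alpha(\Vert x\Vert_{\mathcal{S}})$ on $\mathcal{D}\setminus\mathcal{S}$, as required.

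The main obstacle is really bookkeeping rather than analysis: $\gamma$ is continuous only on the non-compact set $\mathcal{D}\setminus\mathcal{S}$, so attainment and positivity of the defining infimum of $\hat\alpha$ must be argued carefully; this is resolved by noting that the constraint $\Vert x\Vert_{\mathcal{S}}\geq r>0$ itself rules out $\mathcal{S}$, producing a compact minimisation domain inside $\mathcal{D}\setminus\mathcal{S}$. The subsequent step promoting $\hat\alpha$ to a class $\mathcal{K}$ lower bound is a standard averaging argument with the factor $1/(\bar r+1)$ tuned precisely so that $\alpha\leq\hat\alpha$ on the entire interval $[0,\bar r]$.
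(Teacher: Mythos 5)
Your proposal is correct. Its first stage coincides with the paper's: your $\hat\alpha$ is exactly the auxiliary function $p$ in the paper's proof (minimising $\gamma$ over $\{x\in\mathcal{D}\mid \Vert x\Vert_{\mathcal{S}}\geq r\}$), with the same observations about attainment on a compact subset of $\mathcal{D}\setminus\mathcal{S}$, positivity, monotonicity, and the pointwise bound $p(\Vert x\Vert_{\mathcal{S}})\leq\gamma(x)$; you are in fact slightly more careful than the paper in noting that the constraint $\Vert x\Vert_{\mathcal{S}}\geq r>0$ excludes $\mathcal{S}$ (which is closed, being compact), so that continuity of $\gamma$ on $\mathcal{D}\setminus\mathcal{S}$ suffices. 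Where you genuinely diverge is the second stage: the paper upgrades the merely non-decreasing $p$ to a class $\mathcal{K}$ lower bound via the dyadic partition $R_k=\bar r 2^{-k}$ and the piecewise-affine construction of Clarke, Ledyaev and Stern, whereas you use the running average $\alpha(r)=\frac{1}{\bar r+1}\int_0^r\hat\alpha(s)\,ds$, which is automatically continuous (indeed Lipschitz) and strictly increasing because $\hat\alpha>0$ on $(0,\bar r]$, and satisfies $\alpha(r)\leq\frac{r}{\bar r+1}\hat\alpha(r)\leq\hat\alpha(r)$ by monotonicity of $\hat\alpha$. Both routes are valid; yours is more self-contained and avoids checking continuity of the piecewise-affine bound at $0$ through $\lim_{k\to\infty}\phi_k=0$, while the paper's construction stays parallel to the cited literature and to the companion upper-bound result (Lemma~2), where the same partitioning argument is reused. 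The only cosmetic gap is that you do not explicitly note non-emptiness of the feasible set for $r\in(0,\bar r]$, which follows immediately from $\bar r=\max_{x\in\mathcal{D}}\Vert x\Vert_{\mathcal{S}}$.
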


\begin{proof}
Define the auxiliary function $p\!:\! [0,\overline{r}] {\to} \mathbb{R}_{\geq0}$ as
\begin{align}
\label{eq:prop_1_min_bound}
    \hspace{-0.1cm}p(r) \!:=\! 
    \begin{cases}
    0 &\hspace{-0.5cm} \text{if } r = 0, \\
    & \\
    \parbox{4.7cm}{\footnotesize $\min\limits_x \;\; \gamma(x) \\
    \text{s.t.} \;\;  x \in \mathcal{D}\setminus \{x \in \mathcal{D} \mid \Vert x\Vert_{\mathcal{S}} < r\}$}& \hspace{-0.5cm}\mathrm{otherwise}.
    \end{cases}
\end{align}
Since $\gamma$ is a continuous function and $\mathcal{D} \setminus \{x \in \mathcal{D} \mid \Vert x\Vert_{\mathcal{S}} < r\}$ is a compact set, the above minimum is attained for all $r\in(0,\overline{r}]$. 
It holds that
\begin{align*}
p(\Vert x \Vert_{\mathcal{S}}) \leq  \gamma(x), \; \forall x \in \mathcal{D}\setminus\mathcal{S},
\end{align*}
which follows directly from the above minimization as $x$ is a feasible point if $\Vert x \Vert_{\mathcal{S}} > 0$. Additionally, it holds that $p$ is non-decreasing with increasing $r$, since $\tilde r > r$ implies $\mathcal{D}\setminus\{x \in \mathcal{D} \mid \Vert x\Vert_{\mathcal{S}} < \tilde r\}\subseteq\mathcal{D}\setminus\{x \in \mathcal{D} \mid \Vert x\Vert_{\mathcal{S}} < r\}$, i.e., the domain in the constrained minimization problem (second case in \eqref{eq:prop_1_min_bound}) is non-increasing.
As $p(r)$ is not necessarily continuous for non-convex sets $\mathcal S$, we construct a class $\mathcal{K}$ function $\alpha : [0,\overline{r}] \to \mathbb{R}_{\geq0}$ which is a lower bound of $p$ on $\mathcal{D}\setminus\mathcal{S}$ similarly to the result in~\cite[Lemma 2.5]{clarke98}.
Consider the partitions $(R_{k+1},R_k]$ on the interval $(0,\overline{r}]$ with boundary points $R_k := \overline{r}2^{-k} , \mathrm{for} \; k = 0, 1, 2, \dots$ and define $\phi_k:=2^{-k} p(R_{k+1})$.
Note that $\phi_{k+1} < \phi_{k}$, since $p(r)$ is non-decreasing and $2^{-k-1} < 2^{-k}$ $\forall k \geq 0$. In addition, it holds that $\phi_k\leq p(r), \forall r \in (R_{k+1},R_k]$ as $\phi_k=2^{-k} p(R_{k+1}) \leq p(R_{k+1})$.

We construct the function
\begin{align*}
\alpha(r) = \begin{cases}
    0 & \text{if } r=0, \\
    \phi_{k+1} + \frac{\phi_{k} - \phi_{k+1}}{(R_{k}-R_{k+1})}(r-R_{k+1}),& r\in(R_{k+1},R_k],
\end{cases}
\end{align*}
which is piece-wise affine on successive intervals $r \in (R_{k+1}, R_{k}]$, continuous, as $\lim_{k\rightarrow \infty} \phi_k = 0$, strictly increasing and is therefore a class $\mathcal{K}$ function. Furthermore, it holds that $p(\Vert x\Vert_\mathcal{S}) \leq \gamma(x)$, which implies $\alpha(\Vert x \Vert_{\mathcal{S}}) \leq \gamma(x) \; \forall x\in\mathcal{D}\setminus\mathcal{S}$ as desired.
\end{proof}
In a similar fashion, an upper bound for positive continuous functions can be constructed, which to the best of the author's knowledge has not been previously proposed in this form. Note that in~\cite[Proposition B.25]{rawlings2017}, such an upper bound is shown only with respect to a singleton $\mathcal{S}=\{0\}$.

\begin{lemma}
\label{lemma:clK_upbound}
Consider the compact and non-empty sets $\mathcal{S}$ and $\mathcal{D}$, with $\mathcal{S}\subset\mathcal{D}$ and a continuous function $\gamma:\mathcal{D}\setminus \interior\mathcal{S}\rightarrow\mathbb{R}$, with $\gamma(x)>0$ for all $x\in\mathcal{D}\setminus\mathcal{S}$ and $\gamma(x)=0$ for all $x\in\partial\mathcal{S}$. Then there exists a $\mathcal{K}$-function $\beta:[0,\bar{r}]\rightarrow\mathbb{R}_{\geq 0}$, with $\bar{r}:=\max_{x\in\mathcal{D}}\Vert x \Vert_\mathcal{S}$ such that
\begin{equation}
    \gamma(x)\leq \beta(\Vert x \Vert_\mathcal{S}) \; \forall x\in\mathcal{D}\setminus \interior\mathcal{S}.
\end{equation}
\end{lemma}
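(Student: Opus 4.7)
My approach is to mirror the construction in Lemma~\ref{lemma:clK_bound}, but replacing the minimum by a maximum to obtain an upper envelope of $\gamma$ that depends only on $\|x\|_\mathcal{S}$, and then majorising this envelope by a class~$\mathcal{K}$ function on a dyadic grid. Concretely, I would define the auxiliary function $q:[0,\bar r]\to\mathbb{R}_{\geq 0}$ by $q(0):=0$ and
\[
q(r) := \max_{x}\bigl\{\gamma(x)\;\big|\; x\in\mathcal{D}\setminus\interior\mathcal{S},\ \|x\|_\mathcal{S}\leq r\bigr\}\quad \text{for } r\in(0,\bar r].
\]
Since $\mathcal{D}\setminus\interior\mathcal{S}$ is closed, $\{x : \|x\|_\mathcal{S}\leq r\}$ is closed, and both lie in the compact set $\mathcal{D}$, their intersection is compact; together with continuity of $\gamma$ this ensures the maximum is attained. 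By construction $q$ is non-decreasing in $r$, and $\gamma(x)\leq q(\|x\|_\mathcal{S})$ for every $x\in\mathcal{D}\setminus\interior\mathcal{S}$.

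The key step, and the one I expect to require the most care, is showing that $q$ is right-continuous at $0$, i.e.\ $\lim_{r\to 0^+} q(r)=0$. This is where the assumption $\gamma(x)=0$ on $\partial\mathcal{S}$ is essential. I would argue by contradiction: if $q(r_n)\geq \varepsilon$ for some $r_n\downarrow 0$ and $\varepsilon>0$, then by the attainment above there exist $x_n\in\mathcal{D}\setminus\interior\mathcal{S}$ with $\|x_n\|_\mathcal{S}\leq r_n$ and $\gamma(x_n)\geq \varepsilon$. By compactness of $\mathcal{D}$ and closedness of $\mathcal{D}\setminus\interior\mathcal{S}$, a subsequence converges to some $x^\star\in\mathcal{D}\setminus\interior\mathcal{S}$ with $\|x^\star\|_\mathcal{S}=0$, hence $x^\star\in\partial\mathcal{S}$. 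Continuity of $\gamma$ then forces $\gamma(x^\star)\geq\varepsilon>0$, contradicting $\gamma\equiv 0$ on $\partial\mathcal{S}$.

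With the upper envelope $q$ in hand (non-decreasing, non-negative, $q(0)=0$, $\lim_{r\to 0^+}q(r)=0$), I would construct $\beta$ as a piecewise affine, strictly increasing continuous majorant using the same dyadic grid $R_k:=\bar r\, 2^{-k}$ as in Lemma~\ref{lemma:clK_bound}. The natural candidate is to set $\psi_0:=q(\bar r)+1$ and $\psi_{k+1}:=q(R_k)+2^{-k-1}$ for $k\geq 0$, and define $\beta$ to interpolate these values affinely on each $(R_{k+1},R_k]$ with $\beta(0):=0$. The inequalities $q(R_{k-1})\geq q(R_k)$ and $2^{-k}>2^{-k-1}$ yield $\psi_k>\psi_{k+1}$, so $\beta$ is strictly increasing; $\psi_k\to 0$ (using the right-continuity of $q$ at $0$) gives continuity at $0$; and on each $(R_{k+1},R_k]$ we have $\beta(r)\geq \beta(R_{k+1})=\psi_{k+1}\geq q(R_k)\geq q(r)$, which provides the desired pointwise majorisation. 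Combining $\gamma(x)\leq q(\|x\|_\mathcal{S})\leq \beta(\|x\|_\mathcal{S})$ for all $x\in\mathcal{D}\setminus\interior\mathcal{S}$ completes the proof.
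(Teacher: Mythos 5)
Your proposal is correct and follows essentially the same route as the paper: the same maximisation-based envelope $q(r)$ over $\{x\in\mathcal{D}\setminus\interior\mathcal{S} : \Vert x\Vert_\mathcal{S}\leq r\}$, the same monotonicity and majorisation observations, and the same dyadic piecewise-affine construction (your $\psi_k$ coincide with the paper's $\phi_k$). The only difference is that you spell out the right-continuity of $q$ at $0$ via a compactness/contradiction argument, a step the paper delegates to the cited construction of Clarke et al.
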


\begin{proof}
Define the auxiliary function $q\!:\![0,\bar{r}]\rightarrow\mathbb{R}_{\geq0}$ as 
\begin{align} \label{eq:upperbound_aux}
q(r) :=  \max\limits_x \;\; & \gamma(x) \\
    \text{s.t.}  \; & x \in (\mathcal{D}\setminus\interior\mathcal{S}) \setminus\{x \in \mathcal{D} \mid \Vert x\Vert_{\mathcal{S}} > r\}. \nonumber
\end{align}
As $\gamma$ is a continuous function and $(\mathcal{D}\setminus\interior\mathcal{S}) \setminus\{x \in \mathcal{D} \mid \Vert x\Vert_{\mathcal{S}} > r\}$ is a compact set, the above maximum is attained for all $r \in [0,\bar{r}]$. It holds that 
$$q(\Vert x \Vert_\mathcal{S})\geq \gamma(x) \;\;\forall x\in \mathcal{D}\setminus\interior\mathcal{S}$$
which follows directly from the fact that $x$ is a feasible point in the above maximisation and the fact that $\gamma(x) = 0, \; \forall x \in \partial\mathcal{S}$. Additionally, it holds that $q$ is non-decreasing with $r$, since $\tilde{r}> r$ implies $(\mathcal{D}\setminus\interior\mathcal{S})\setminus\{x \in \mathcal{D} \mid \Vert x\Vert_{\mathcal{S}} > r\}\subseteq(\mathcal{D}\setminus\interior\mathcal{S})\setminus\{x \in \mathcal{D} \mid \Vert x\Vert_{\mathcal{S}} > \tilde{r}\}$, i.e., the domain in the constrained maximisation problem in \eqref{eq:upperbound_aux} is non-decreasing.

A continuous, strictly increasing upper bound for $q(r)$ is then given in the proof of \cite[Lemma 2.5]{clarke98} by defining the partitions $[R_{k+1},R_k)$ with $R_k=2^{-k}\bar{r}$ for $k\geq 0$, the quantities $\phi_k=q(R_{k-1})+2^{-k}$ for $k \geq 0$ with $R_{-1}=\bar{r}$ and the piece-wise affine function
\begin{equation*}
\beta(r)=\begin{cases}
 0 & \textup{ if } r=0\\
 \phi_{k+1} + \frac{\phi_k-\phi_{k+1}}{R_k-R_{k+1}}(r-R_{k+1}) & \textup{ if } r\in[R_{k+1}, R_k) \\
 \phi_0 & \textup{ if } r=\bar{r}.
\end{cases}
\end{equation*}

It then holds that $\beta(0)=0$, $\beta$ is strictly increasing as $\phi_{k}\geq\phi_{k+1}$ as $2^{-k}>2^{-k-1}$ and $q$ is increasing, and $\beta$ is continuous as $\lim_{r\rightarrow 0^+}\beta(r)=0$ as $\lim_{k\rightarrow \infty }\phi_k=0$ and $\lim_{r\rightarrow\bar{r}^-}\beta(r)=\beta(\bar{r})$, therefore we have $\beta\in\mathcal{K}$.
\end{proof}

\begin{figure}[!t]
    \vspace{0.3cm}
    \centering
    \tikzset{every picture/.style={line width=0.75pt}} 
    
    \begin{tikzpicture}[x=0.75pt,y=0.75pt,yscale=-1,xscale=1]
    
    \draw  [color={rgb, 255:red, 74; green, 144; blue, 226 }  ,draw opacity=1 ][fill={rgb, 255:red, 74; green, 144; blue, 226 }  ,fill opacity=1 ] (109.48,140.81) -- (391.12,140.81) -- (391.12,144) -- (109.48,144) -- cycle ;
    \draw  [color={rgb, 255:red, 208; green, 2; blue, 27 }  ,draw opacity=1 ][fill={rgb, 255:red, 208; green, 2; blue, 27 }  ,fill opacity=1 ] (147.81,137.74) -- (190.55,137.74) -- (190.55,139.74) -- (147.81,139.74) -- cycle ;
    \draw  [color={rgb, 255:red, 208; green, 2; blue, 27 }  ,draw opacity=1 ][fill={rgb, 255:red, 208; green, 2; blue, 27 }  ,fill opacity=1 ] (233.66,137.36) -- (276.38,137.36) -- (276.38,140) -- (233.66,140) -- cycle ;
    \draw  [color={rgb, 255:red, 208; green, 2; blue, 27 }  ,draw opacity=1 ][fill={rgb, 255:red, 208; green, 2; blue, 27 }  ,fill opacity=1 ] (305.47,137.86) -- (347.73,137.86) -- (347.73,140) -- (305.47,140) -- cycle ;
    \draw  [dash pattern={on 4.5pt off 4.5pt}]  (233.66,137.36) -- (233.66,70) ;
    \draw  [dash pattern={on 4.5pt off 4.5pt}]  (276.38,137.36) -- (276.38,70) ;
    \draw  [dash pattern={on 4.5pt off 4.5pt}]  (147.81,137.36) -- (147.81,70) ;
    \draw  [dash pattern={on 4.5pt off 4.5pt}]  (190.55,137.36) -- (190.55,70) ;
    \draw  [dash pattern={on 4.5pt off 4.5pt}]  (305.47,137.36) -- (305.47,70) ;
    \draw  [dash pattern={on 4.5pt off 4.5pt}]  (347.73,137.36) -- (347.73,70) ;
    \draw  (90.48,144) -- (407.09,144)(109.48,54) -- (109.48,154) (400.09,139) -- (407.09,144) -- (400.09,149) (104.48,61) -- (109.48,54) -- (114.48,61)  ;
    \draw [line width=1.5]    (109.48,75) .. controls (133.73,92.92) and (143.4,96.67) .. (147.81,139.74) ;
    \draw [line width=1.5]    (276.38,140) .. controls (287.84,125) and (282.01,101.74) .. (291.32,101.74) ;
    \draw [line width=1.5]    (291.32,101.74) .. controls (299.2,101.74) and (295.8,139.86) .. (305.47,140) ;
    \draw [line width=1.5]    (347.73,140) .. controls (352.93,85.17) and (364.93,75.42) .. (391.07,70.36) ;
    \draw  (90.48,248) -- (407.09,248)(109.48,158) -- (109.48,258) (400.09,243) -- (407.09,248) -- (400.09,253) (104.48,165) -- (109.48,158) -- (114.48,165)  ;
    \draw  (90.48,352) -- (407.09,352)(109.48,262) -- (109.48,362) (400.09,347) -- (407.09,352) -- (400.09,357) (104.48,269) -- (109.48,262) -- (114.48,269)  ;
    \draw [line width=1.5]    (190.55,139.74) .. controls (195.69,107.92) and (198.2,79.42) .. (202.97,97.07) ;
    \draw [line width=1.5]    (202.97,97.07) .. controls (212.53,131.92) and (212.89,94.42) .. (214.68,92.17) ;
    \draw [line width=1.5]    (214.68,92.17) .. controls (222.56,67.17) and (216.12,137.41) .. (233.66,140) ;
    \draw [color={rgb, 255:red, 65; green, 117; blue, 5 }  ,draw opacity=1 ][line width=1.5]    (109.56,70.11) -- (129.2,70.32) ;
    \draw [shift={(133.2,70.36)}, rotate = 180.61] [fill={rgb, 255:red, 65; green, 117; blue, 5 }  ,fill opacity=1 ][line width=0.08]  [draw opacity=0] (4.64,-2.23) -- (0,0) -- (4.64,2.23) -- cycle    ;
    \draw [color={rgb, 255:red, 65; green, 117; blue, 5 }  ,draw opacity=1 ][line width=1.5]    (208.83,70.36) -- (215.33,70.36) ;
    \draw [shift={(219.33,70.36)}, rotate = 180] [fill={rgb, 255:red, 65; green, 117; blue, 5 }  ,fill opacity=1 ][line width=0.08]  [draw opacity=0] (4.64,-2.23) -- (0,0) -- (4.64,2.23) -- cycle    ;
    \draw [shift={(204.83,70.36)}, rotate = 0] [fill={rgb, 255:red, 65; green, 117; blue, 5 }  ,fill opacity=1 ][line width=0.08]  [draw opacity=0] (4.64,-2.23) -- (0,0) -- (4.64,2.23) -- cycle    ;
    \draw [color={rgb, 255:red, 65; green, 117; blue, 5 }  ,draw opacity=1 ][line width=1.5]    (366.24,70.36) -- (391.07,70.36) ;
    \draw [shift={(362.24,70.36)}, rotate = 0] [fill={rgb, 255:red, 65; green, 117; blue, 5 }  ,fill opacity=1 ][line width=0.08]  [draw opacity=0] (4.64,-2.23) -- (0,0) -- (4.64,2.23) -- cycle    ;
    \draw [color={rgb, 255:red, 99  ; green, 99; blue, 99}  ,draw opacity=1 ][fill={rgb, 255:red, 99  ; green, 99; blue, 99}  ,fill opacity=1 ][line width=1.5]    (147.52,70.36) -- (137.2,70.36) ;
    \draw [shift={(133.2,70.36)}, rotate = 360] [fill={rgb, 255:red, 99  ; green, 99; blue, 99}  ,fill opacity=1 ][line width=0.08]  [draw opacity=0] (4.64,-2.23) -- (0,0) -- (4.64,2.23) -- cycle    ;
    \draw [color={rgb, 255:red, 99  ; green, 99; blue, 99}  ,draw opacity=1 ][fill={rgb, 255:red, 99  ; green, 99; blue, 99}  ,fill opacity=1 ][line width=1.5]    (190.86,70.11) -- (200.83,70.29) ;
    \draw [shift={(204.83,70.36)}, rotate = 181.03] [fill={rgb, 255:red, 99  ; green, 99; blue, 99}  ,fill opacity=1 ][line width=0.08]  [draw opacity=0] (4.64,-2.23) -- (0,0) -- (4.64,2.23) -- cycle    ;
    \draw [color={rgb, 255:red, 99  ; green, 99; blue, 99}  ,draw opacity=1 ][fill={rgb, 255:red, 99  ; green, 99; blue, 99}  ,fill opacity=1 ][line width=1.5]    (223.33,70.43) -- (233.3,70.61) ;
    \draw [shift={(219.33,70.36)}, rotate = 1.03] [fill={rgb, 255:red, 99  ; green, 99; blue, 99}  ,fill opacity=1 ][line width=0.08]  [draw opacity=0] (4.64,-2.23) -- (0,0) -- (4.64,2.23) -- cycle    ;
    \draw [color={rgb, 255:red, 99  ; green, 99; blue, 99}  ,draw opacity=1 ][fill={rgb, 255:red, 99  ; green, 99; blue, 99}  ,fill opacity=1 ][line width=1.5]    (294.79,70.43) -- (304.75,70.61) ;
    \draw [shift={(290.79,70.36)}, rotate = 1.03] [fill={rgb, 255:red, 99  ; green, 99; blue, 99}  ,fill opacity=1 ][line width=0.08]  [draw opacity=0] (4.64,-2.23) -- (0,0) -- (4.64,2.23) -- cycle    ;
    \draw [color={rgb, 255:red, 99  ; green, 99; blue, 99}  ,draw opacity=1 ][fill={rgb, 255:red, 99  ; green, 99; blue, 99}  ,fill opacity=1 ][line width=1.5]    (276.82,70.11) -- (286.79,70.29) ;
    \draw [shift={(290.79,70.36)}, rotate = 181.03] [fill={rgb, 255:red, 99  ; green, 99; blue, 99}  ,fill opacity=1 ][line width=0.08]  [draw opacity=0] (4.64,-2.23) -- (0,0) -- (4.64,2.23) -- cycle    ;
    \draw [color={rgb, 255:red, 99  ; green, 99; blue, 99}  ,draw opacity=1 ][fill={rgb, 255:red, 99  ; green, 99; blue, 99}  ,fill opacity=1 ][line width=1.5]    (348.27,70.11) -- (358.24,70.29) ;
    \draw [shift={(362.24,70.36)}, rotate = 181.03] [fill={rgb, 255:red, 99  ; green, 99; blue, 99}  ,fill opacity=1 ][line width=0.08]  [draw opacity=0] (4.64,-2.23) -- (0,0) -- (4.64,2.23) -- cycle    ;
    \draw [color={rgb, 255:red, 65; green, 117; blue, 5 }  ,draw opacity=1 ][line width=1.5]    (109.48,248) .. controls (153.07,245.17) and (141.95,224.83) .. (190.84,220.33) ;
    \draw  [color={rgb, 255:red, 65; green, 117; blue, 5 }  ,draw opacity=1 ][fill={rgb, 255:red, 65; green, 117; blue, 5 }  ,fill opacity=1 ] (232.48,220.58) .. controls (232.48,219.96) and (232.84,219.46) .. (233.28,219.46) .. controls (233.73,219.46) and (234.09,219.96) .. (234.09,220.58) .. controls (234.09,221.2) and (233.73,221.71) .. (233.28,221.71) .. controls (232.84,221.71) and (232.48,221.2) .. (232.48,220.58) -- cycle ;
    \draw  [color={rgb, 255:red, 65; green, 117; blue, 5 }  ,draw opacity=1 ] (232.5,201.17) .. controls (232.5,200.55) and (232.5,200.04) .. (233.84,200.04) .. controls (235,200.04) and (235,200.55) .. (235,201.17) .. controls (235,201.79) and (235,202.29) .. (233.84,202.29) .. controls (232.5,202.29) and (232.5,201.79) .. (232.5,201.17) -- cycle ;
    \draw [color={rgb, 255:red, 65; green, 117; blue, 5 }  ,draw opacity=1 ][line width=1.5]    (190.12,220.33) .. controls (193.53,220.33) and (230.06,220.58) .. (233.28,220.58) ;
    \draw [color={rgb, 255:red, 65; green, 117; blue, 5 }  ,draw opacity=1 ][line width=1.5]    (234.64,200.17) .. controls (268.13,171.17) and (337.35,171.17) .. (398.06,170.42) ;
    \draw    (398.39,245.06) -- (398.39,251.06) ;
    \draw    (255.13,245.06) -- (255.13,251.06) ;
    \draw    (183.5,245.06) -- (183.5,251.06) ;
    \draw  [dash pattern={on 4.5pt off 4.5pt}]  (254.95,187.83) -- (254.95,247.83) ;
    \draw  [dash pattern={on 4.5pt off 4.5pt}]  (254.95,187.83) -- (397.68,187.33) ;
    \draw  [fill={rgb, 255:red, 0; green, 0; blue, 0 }  ,fill opacity=1 ] (396.87,187.33) .. controls (396.87,186.71) and (397.23,186.21) .. (397.68,186.21) .. controls (398.12,186.21) and (398.48,186.71) .. (398.48,187.33) .. controls (398.48,187.95) and (398.12,188.46) .. (397.68,188.46) .. controls (397.23,188.46) and (396.87,187.95) .. (396.87,187.33) -- cycle ;
    \draw  [dash pattern={on 4.5pt off 4.5pt}]  (183.32,221.83) -- (183.5,248.58) ;
    \draw  [dash pattern={on 4.5pt off 4.5pt}]  (183.32,221.83) -- (254.76,222.04) ;
    \draw  [fill={rgb, 255:red, 0; green, 0; blue, 0 }  ,fill opacity=1 ] (253.97,234.58) .. controls (253.97,233.96) and (254.33,233.46) .. (254.77,233.46) .. controls (255.22,233.46) and (255.58,233.96) .. (255.58,234.58) .. controls (255.58,235.2) and (255.22,235.71) .. (254.77,235.71) .. controls (254.33,235.71) and (253.97,235.2) .. (253.97,234.58) -- cycle ;
    \draw  [fill={rgb, 255:red, 0; green, 0; blue, 0 }  ,fill opacity=1 ] (182.69,245.06) .. controls (182.69,244.43) and (183.05,243.93) .. (183.5,243.93) .. controls (183.94,243.93) and (184.3,244.43) .. (184.3,245.06) .. controls (184.3,245.68) and (183.94,246.18) .. (183.5,246.18) .. controls (183.05,246.18) and (182.69,245.68) .. (182.69,245.06) -- cycle ;
    \draw [line width=1.5]    (109.48,248) -- (183.5,245.06) ;
    \draw [line width=1.5]    (183.5,245.06) -- (254.77,234.58) ;
    \draw [line width=1.5]    (255.58,234.58) -- (397.68,187.33) ;
    \draw [color={rgb, 255:red, 99  ; green, 99; blue, 99}  ,draw opacity=1 ][line width=1.5]    (109.48,352) .. controls (138.13,322) and (151.92,322.75) .. (166.6,322.25) ;
    \draw [color={rgb, 255:red, 99  ; green, 99; blue, 99}  ,draw opacity=1 ][line width=1.5]    (166.6,322.25) -- (195.26,322.25) ;
    \draw [color={rgb, 255:red, 99  ; green, 99; blue, 99}  ,draw opacity=1 ][line width=1.5]    (195.26,322.25) .. controls (201.17,303.75) and (257.04,292.5) .. (398.15,290.5) ;
    \draw    (398.22,349.31) -- (398.22,355.31) ;
    \draw    (254.95,349.31) -- (254.95,355.31) ;
    \draw    (183.32,349.31) -- (183.32,355.31) ;
    \draw  [dash pattern={on 4.5pt off 4.5pt}]  (398.15,290.5) -- (398.15,352.5) ;
    \draw  [dash pattern={on 0.84pt off 2.51pt}]  (398.15,275.5) -- (398.15,290.5) ;
    \draw  [dash pattern={on 4.5pt off 4.5pt}]  (254.89,291) -- (254.71,351.75) ;
    \draw  [dash pattern={on 4.5pt off 4.5pt}]  (254.89,291) -- (398.15,290.5) ;
    \draw  [fill={rgb, 255:red, 0; green, 0; blue, 0 }  ,fill opacity=1 ] (397.35,275.5) .. controls (397.35,274.88) and (397.71,274.38) .. (398.15,274.38) .. controls (398.6,274.38) and (398.96,274.88) .. (398.96,275.5) .. controls (398.96,276.12) and (398.6,276.63) .. (398.15,276.63) .. controls (397.71,276.63) and (397.35,276.12) .. (397.35,275.5) -- cycle ;
    \draw  [fill={rgb, 255:red, 0; green, 0; blue, 0 }  ,fill opacity=1 ] (254.08,282.63) .. controls (254.08,282) and (254.45,281.5) .. (254.89,281.5) .. controls (255.34,281.5) and (255.7,282) .. (255.7,282.63) .. controls (255.7,283.25) and (255.34,283.75) .. (254.89,283.75) .. controls (254.45,283.75) and (254.08,283.25) .. (254.08,282.63) -- cycle ;
    \draw  [fill={rgb, 255:red, 0; green, 0; blue, 0 }  ,fill opacity=1 ] (182.51,293.33) .. controls (182.51,292.71) and (182.87,292.21) .. (183.32,292.21) .. controls (183.76,292.21) and (184.12,292.71) .. (184.12,293.33) .. controls (184.12,293.95) and (183.76,294.46) .. (183.32,294.46) .. controls (182.87,294.46) and (182.51,293.95) .. (182.51,293.33) -- cycle ;
    \draw  [dash pattern={on 0.84pt off 2.51pt}]  (254.89,283.75) -- (254.89,291) ;
    \draw  [dash pattern={on 4.5pt off 4.5pt}]  (183.32,299.08) -- (254.77,298.58) ;
    \draw  [dash pattern={on 4.5pt off 4.5pt}]  (183.32,299.08) -- (183.26,352.25) ;
    
    \draw  [dash pattern={on 4.5pt off 4.5pt}]  (153.32,159.08) -- (153.26,352.25) ;
    \draw [line width=0.8]    (153.32,348.25) -- (153.32,355.25) ;

    \draw  [dash pattern={on 0.84pt off 2.51pt}]  (183.26,296.75) -- (183.32,299.08) ;
    \draw [line width=1.5]    (183.32,293.33) -- (254.89,282.63) ;
    \draw [line width=1.5]    (254.89,282.63) -- (398.15,275.5) ;
    \draw [line width=1.5]    (144.93,319.25) -- (183.32,293.33) ;
    \draw [line width=1.5]    (133.47,325) -- (144.93,319.25) ;
    \draw [line width=1.5]    (127.74,330) -- (133.47,325) ;
    \draw [line width=1.5]    (109.48,352) .. controls (114.85,345.5) and (120.94,336.5) .. (127.74,330) ;
    
    \draw (382.78,119.5) node [anchor=north west][inner sep=0.75pt]  [color={rgb, 255:red, 74; green, 144; blue, 226 }  ,opacity=1 ]  {$\mathcal{D}$};
    \draw (316.95,114) node [anchor=north west][inner sep=0.75pt]  [color={rgb, 255:red, 208; green, 2; blue, 27 }  ,opacity=1 ]  {$\mathcal{S}$};
    \draw (404.48,130) node [anchor=north west][inner sep=0.75pt]    {$x$};
    \draw (361.82,85.5) node [anchor=north west][inner sep=0.75pt]    {$\gamma ( x)$};
    \draw (394.22,226.75) node [anchor=north west][inner sep=0.75pt]    {$\| x\| _{\mathcal{S}}$};
    \draw (297.88,55) node [anchor=north west][inner sep=0.75pt]  [font=\scriptsize,color={rgb, 255:red, 65; green, 117; blue, 5 }  ,opacity=1 ]  {$\mathcal{D} \setminus \{x\in \mathcal{D} \mid \| x\| _{\mathcal{S}} < r\}$};
    \draw (114.71,55) node [anchor=north west][inner sep=0.75pt]  [font=\scriptsize,color={rgb, 255:red, 99  ; green, 99; blue, 99}  ,opacity=1 ]  {$(\mathcal{D} \setminus \text{int}_{\mathcal{D}}\mathcal{S} )\setminus \{x\in \mathcal{D} \mid \| x\| _{\mathcal{S}}  >r\}$};
    \draw (170,200) node [anchor=north west][inner sep=0.75pt]  [color={rgb, 255:red, 65; green, 117; blue, 5 }  ,opacity=1 ]  {$p( \| x\| _{\mathcal{S}})$};
    \draw (355,249) node [anchor=north west][inner sep=0.75pt]    {$R_{0} =\overline{r}$};
    \draw (242.35,248.75) node [anchor=north west][inner sep=0.75pt]    {$R_{1}$};
    \draw (170.9,248.75) node [anchor=north west][inner sep=0.75pt]    {$R_{2}$};
    \draw (143,355.1) node [anchor=north west][inner sep=0.75pt]    {$r$};
    \draw (396.7,330.75) node [anchor=north west][inner sep=0.75pt]    {$\| x\| _{\mathcal{S}}$};
    \draw (385.62,189.75) node [anchor=north west][inner sep=0.75pt]    {$\phi _{0}$};
    \draw (255.36,215) node [anchor=north west][inner sep=0.75pt]    {$\phi _{1}$};
    \draw (315.65,214.5) node [anchor=north west][inner sep=0.75pt]    {$\alpha ( \| x\| _{\mathcal{S}})$};
    \draw (339.75,294.5) node [anchor=north west][inner sep=0.75pt]  [color={rgb, 255:red, 99  ; green, 99; blue, 99 }  ,opacity=1 ]  {$q( \| x\| _{\mathcal{S}})$};
    \draw (397.91,265.5) node [anchor=north west][inner sep=0.75pt]    {$\phi _{0}$};
    \draw (353.82,352.25) node [anchor=north west][inner sep=0.75pt]    {$R_{0} =\overline{r}$};
    \draw (241.82,352.5) node [anchor=north west][inner sep=0.75pt]    {$R_{1}$};
    \draw (170,352.5) node [anchor=north west][inner sep=0.75pt]    {$R_{2}$};
    \draw (255,266) node [anchor=north west][inner sep=0.75pt]    {$\phi _{1}$};
    \draw (167,277) node [anchor=north west][inner sep=0.75pt]    {$\phi _{2}$};
    \draw (188,268) node [anchor=north west][inner sep=0.75pt]    {$\beta ( \| x\| _{\mathcal{S}})$};
    
     
    \end{tikzpicture}
    \caption{Illustrative example of the class $\mathcal{K}$ lower and upper bounds in Lemmata~\ref{lemma:clK_bound}~and~\ref{lemma:clK_upbound}. The function $\gamma(x)$ is minimised on $\mathcal{D}\setminus\{x\in\mathcal{D}\mid \Vert x \Vert_\mathcal{S} < r\}$ (green) and maximised on $(\mathcal{D}\setminus\textup{int}_{\mathcal{D}}\mathcal{S})\setminus\{x\in\mathcal{D}\mid \Vert x \Vert_\mathcal{S} > r\}$ (grey) to obtain the non-decreasing functions $p(\Vert x\Vert_{\mathcal{S}})$ and $q(\Vert x \Vert_{\mathcal{S}})$, respectively. Finally, the class $\mathcal{K}$ bounds are constructed via piecewise affine lower and upper bounds.}
    \label{fig:diagram}
\end{figure}
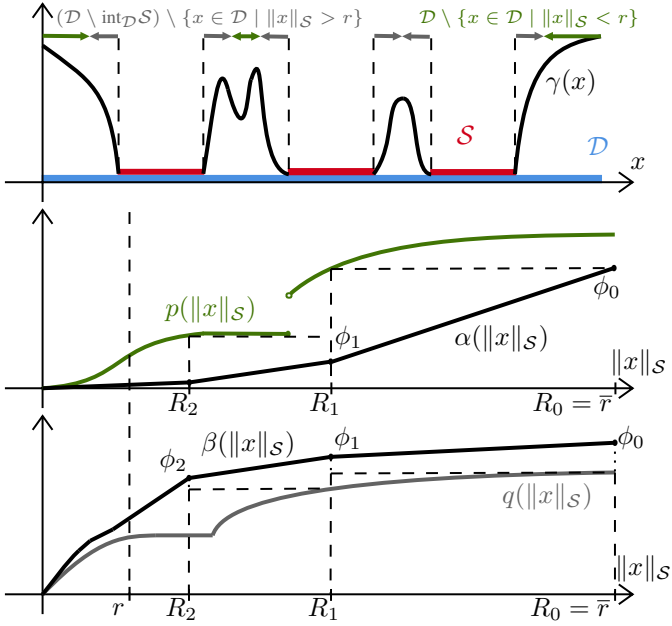

An illustrative example of the constructed upper and lower bounds is provided in Figure~\ref{fig:diagram}. 
We can now use the results in Lemmata~\ref{lemma:clK_bound}~and~\ref{lemma:clK_upbound}, to establish a relation between CBFs in Definitions~\ref{def:CBFcont}~and~\ref{def:CBFclK}.

\begin{theorem} \label{thrm:CBFs}
    Consider a CBF $h:\mathcal{D}\rightarrow \mathbb{R}$ according to Definition~\ref{def:CBFcont} and a  corresponding decrease function $\Delta h$. It holds that the function $\tilde{h}(x):=\max\{0,h(x)\}$, with $\tilde{h}:\mathcal{D}\rightarrow \mathbb{R}$, is a CBF according to Definition~\ref{def:CBFclK}.
\end{theorem}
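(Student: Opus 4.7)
The plan is to verify the three conditions of Definition~\ref{def:CBFclK} for $\tilde{h}(x) = \max\{0, h(x)\}$, invoking Lemmata~\ref{lemma:clK_bound}~and~\ref{lemma:clK_upbound} to convert the continuous objects supplied by Definition~\ref{def:CBFcont} into the required class $\mathcal{K}$ bounds. First, since $\tilde{h}(x) \geq 0$ with equality if and only if $h(x) \leq 0$, the safe set $\{x \mid \tilde{h}(x) \leq 0\}$ coincides with $\mathcal{S}$, so condition~1 on compactness and non-emptiness of $\mathcal{S}$ and $\mathcal{D}$ is inherited directly from the hypothesis.

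For the sandwich in condition~2, I would obtain the lower bound by applying Lemma~\ref{lemma:clK_bound} to $h$ itself, which is continuous on $\mathcal{D}$ and strictly positive on $\mathcal{D}\setminus\mathcal{S}$ by the definition of the safe set; this yields $\alpha_1 \in \mathcal{K}$ with $\alpha_1(\Vert x\Vert_\mathcal{S}) \leq h(x) = \tilde{h}(x)$ on $\mathcal{D}\setminus\mathcal{S}$, and the inequality is trivial on $\mathcal{S}$ because both sides vanish. For the upper bound I would instantiate Lemma~\ref{lemma:clK_upbound} with $\gamma = \tilde{h}$ on $\mathcal{D}\setminus\interior\mathcal{S}$: continuity is preserved under the max with zero, positivity on $\mathcal{D}\setminus\mathcal{S}$ is immediate, and the condition $\tilde{h}(x) = 0$ on $\partial\mathcal{S}$ holds because any boundary point is a limit of sequences both inside and outside $\mathcal{S}$, forcing $h(x) = 0$ by continuity. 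The lemma then supplies $\alpha_2 \in \mathcal{K}$ with $\tilde{h}(x) \leq \alpha_2(\Vert x\Vert_\mathcal{S})$ on $\mathcal{D}\setminus\interior\mathcal{S}$, and the bound extends to $\interior\mathcal{S}$ since $\tilde{h}\equiv 0$ and $\alpha_2(0) = 0$ there.

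The main obstacle, and the place where the clipping in $\tilde{h}$ genuinely interacts with the CBF decrease, is establishing condition~3. Compactness of $\mathcal{U}$ together with continuity of $h \circ f$ ensures that the infimum in \eqref{eq:outside_cbf_condition} is attained at some $u^* \in \mathcal{U}$ with $f(x,u^*,0) \in \mathcal{D}$ and $h(f(x,u^*,0)) \leq h(x) - \Delta h(x)$ for $x \in \mathcal{D}\setminus\mathcal{S}$. A case split on the sign of $h(x) - \Delta h(x)$ then yields $\tilde{h}(f(x,u^*,0)) - \tilde{h}(x) \leq -\min\{h(x), \Delta h(x)\}$: when $h(x) - \Delta h(x) \geq 0$ the clip is inactive and the full decrease $\Delta h(x)$ is preserved, while when $h(x) - \Delta h(x) < 0$ the successor value equals zero so the total drop equals $h(x)$. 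Since $\min\{h, \Delta h\}$ is continuous on $\mathcal{D}\setminus\mathcal{S}$ and strictly positive there, a final application of Lemma~\ref{lemma:clK_bound} produces $\alpha_3 \in \mathcal{K}$ with $\alpha_3(\Vert x\Vert_\mathcal{S}) \leq \min\{h(x), \Delta h(x)\}$, as required. For $x \in \mathcal{S}$ the claim reduces to $\inf_{u\in\mathcal{U}} \tilde{h}(f(x,u,0)) \leq 0$, which follows at once from \eqref{eq:inside_cbf_condition}, noting that any $u^*$ attaining the infimum satisfies $f(x,u^*,0) \in \mathcal{S} \subseteq \mathcal{D}$.
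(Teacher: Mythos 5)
Your proposal is correct and takes essentially the same route as the paper: it verifies the three conditions of Definition~\ref{def:CBFclK}, using Lemma~\ref{lemma:clK_bound} for the lower bound and the decrease and Lemma~\ref{lemma:clK_upbound} (with $\tilde h(x)=h(x)=0$ on $\partial\mathcal{S}$) for the upper bound. The only cosmetic difference is in condition~3, where you split on the sign of $h(x)-\Delta h(x)$ and apply Lemma~\ref{lemma:clK_bound} once to $\min\{h,\Delta h\}$, whereas the paper splits on whether the successor state lands in $\mathcal{S}$ and sets $\alpha_3=\min\{\beta,\alpha_1\}$ — the resulting bound is the same.
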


\begin{proof}
    We prove the three statements in Definition~\ref{def:CBFclK} for $\tilde{h}(x)=\max\{0,h(x)\}$ based on the fact that $h$ satisfies Definition~\ref{def:CBFcont}: \\
     1) From Definition~\ref{def:CBFcont}, it follows that the safe set $\mathcal{S}=\{x\in\mathbb{R}^n\mid h(x) \leq  0\} = \tilde{\mathcal{S}}=\{x\in\mathbb{R}^n\mid \tilde{h}(x) \leq 0\}$ and $\mathcal{D}$ are compact and non-empty. \\
     2) From Lemma~\ref{lemma:clK_bound}, it follows that $\exists\; \alpha_1\in\mathcal{K}$, such that $\alpha_1(\Vert x \Vert_\mathcal{S})\leq h(x)=\tilde{h}(x)\; \forall x\in\mathcal{D}\setminus\mathcal{S}$. As $\tilde{h}(x)=0 \; \forall x \in \mathcal{S}$, it holds that $\alpha_1(\Vert x \Vert_\mathcal{S})\leq \tilde{h}(x)\; \forall x\in\mathcal{D}$. 
     From Lemma~\ref{lemma:clK_upbound}, as it holds that as $h(x)$ is continuous on $\mathcal{D}\setminus\interior\mathcal{S}$ and $h(x)=0\;\forall x\in\partial\mathcal{S}$ by definition, $\exists\;\alpha_2\in\mathcal{K}$ such that $h(x)\leq \alpha_2(\Vert x \Vert_\mathcal{S})\;\forall x\in\mathcal{D}\setminus\interior\mathcal{S}$. With $\tilde{h}(x)=0 \;\forall x\in\mathcal{S}$, it holds that $\tilde{h}(x)\leq\alpha_2(\Vert x \Vert_{\mathcal{S}}) \;\forall x\in \mathcal{D}$. \\
     3) We prove a class $\mathcal{K}$ decrease for $\tilde{h}$ for all $x\in\mathcal{D}$ by considering the following case distinction, denoting an optimal input as $u^*\in\arg\min_{u\in\mathcal{U}} \{h(f(x,u,0))\mid f(x,u,0)\in\mathcal{D}\}$:
     \begin{itemize}
        \itemindent=-8pt
        \item $\forall x\in \mathcal{D}\setminus\mathcal{S}$, we consider the two following cases
        \begin{itemize}      
        \itemindent=-16pt
        \item$f(x,u^*,0)\in\mathcal{D}\setminus\mathcal{S}:$\\ \hspace*{-0.55cm}$ \inf_{u\in\mathcal{U}}\{h(f(x,u,0)){\mid} f(x,u,0){\in}\mathcal{D}\}{-}h(x){\leq}{-}\Delta h(x)$ \\
        \hspace*{-0.55cm}${\Rightarrow}\! \inf_{u\in\mathcal{U}}\{\tilde{h}(f(x,u,0)) {\mid} f(x,u,0){\in}\mathcal{D}\}{\text{-}}\tilde{h}(x){\leq} {-}\Delta h(x)$. 
        \item[] As $\Delta h$ fulfils the conditions of Lemma~\ref{lemma:clK_bound}, it holds that\\ \hspace*{-0.68cm} $\exists \beta \in \mathcal{K}$, such that $\Delta h(x)\geq \beta(\Vert x \Vert_\mathcal{S}) \;\forall x\in\mathcal{D}\setminus\mathcal{S}$ and \\ \hspace*{-0.55cm}$\inf_{u\in\mathcal{U}}\{\tilde{h}(f(x,u,0)){\mid} f(x,u,0){\in}\mathcal{D}\}\text{-}\tilde{h}(x){\leq} {-}\beta(\Vert x\Vert_\mathcal{S})$.
            \item$f(x,u^*,0)\in\mathcal{S}:$ \\
            \hspace*{-0.55cm}It holds that $\inf_{u\in\mathcal{U}}\{\tilde{h}(f(x,u,0)) \mid f(x,u,0)\in\mathcal{D}\}{=}0$. \\ \hspace*{-0.55cm}Using the lower bound $\alpha_1(\Vert x \Vert_\mathcal{S})\leq \tilde{h}(x)$ from step 2), \\ \hspace*{-0.55cm}it holds that  \\ \hspace*{-0.55cm}$\inf_{u\in\mathcal{U}}\{\tilde{h}(f(x,u,0)) {\mid} f(x,u,0){\in}\mathcal{D}\}{-}\tilde{h}(x){\leq} {-}\alpha_1(\Vert x \Vert_\mathcal{S})$
        \end{itemize}
        \hspace*{-0.5cm} We can therefore define the function $\alpha_3\in\mathcal{K}$ with \\
        \hspace*{-0.5cm} $\alpha_3(\Vert x \Vert_\mathcal{S})=\min\{ \beta(\Vert x \Vert_\mathcal{S}),\alpha_1(\Vert x \Vert_\mathcal{S})\}$, such that
        $$\inf_{u\in\mathcal{U}}\{\tilde{h}(f(x,u,0)) \mid f(x,u,0)\in\mathcal{D}\}-\tilde{h}(x)\leq -\alpha_3(\Vert x \Vert_\mathcal{S})$$
        \item $\forall x\in\mathcal{S}:\; \inf_{u\in\mathcal{U}} h(f(x,u,0)) \leq 0$ \\ \hspace*{-0.5cm}$\Rightarrow \inf_{u\in\mathcal{U}}\{\tilde{h}(f(x,u,0)){\mid} f(x,u,0){\in}\mathcal{D}\}{-}\underbrace{\tilde{h}(x)}_{=0}{\leq}\underbrace{{-}\alpha_3(\Vert x \Vert_\mathcal{S})}_{=0}$ \\ \hspace*{-0.4cm} where we used $h(x)\leq\tilde{h}(x)= 0 \;\forall x\in\mathcal{S}$.
     \end{itemize}
     We therefore have $\forall x\in\mathcal{D}$, that 
     $$\inf_{u\in\mathcal{U}}\{\tilde{h}(f(x,u,0)) \mid f(x,u,0){\in}\mathcal{D}\}-\tilde{h}(x)\leq -\alpha_3(\Vert x \Vert_\mathcal{S}).$$
\end{proof}

Given that we have shown the relation between the two considered definitions for CBFs, it follows that the optimal value function $\hpb$ in \eqref{eq:pcbf} is also a CBF according to Definition~\ref{def:CBFclK}.
\begin{corollary}\label{cor:pcbf}
    Let Assumptions~\ref{ass:cont}~and~\ref{ass:terminalCBF} hold. The minimum \eqref{eq:pcbf} exists and if $\alpha_f$ is chosen sufficiently large, then the optimal value function $\hpb(x(k))$ defined in \eqref{eq:cbf_pcbf} is a (predictive) control barrier function according to Definition~\ref{def:CBFclK} with domain $\Dpb := \{ x \in \mathbb{R}^n \mid \hpb(x) \leq \alpha_f \gamma_f \}$ and safe set $\Spb := \{ x \in \mathbb{R}^n \mid \hpb(x) = 0  \}$.
\end{corollary}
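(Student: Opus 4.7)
The plan is to combine Theorem~\ref{thrm:PCBF} with the relation between the two CBF definitions established in Theorem~\ref{thrm:CBFs}, observing that the non-negativity of the slack variables in~\eqref{eq:pcbf} makes the $\max\{0,\cdot\}$ operation in Theorem~\ref{thrm:CBFs} trivial.

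First, I would invoke Theorem~\ref{thrm:PCBF} under Assumptions~\ref{ass:cont}~and~\ref{ass:terminalCBF} to conclude that, for $\alpha_f$ chosen sufficiently large, $\hpb$ is a CBF according to Definition~\ref{def:CBFcont} on the domain $\Dpb = \{x \in \mathbb{R}^n \mid \hpb(x) \leq \alpha_f\gamma_f\}$ with safe set $\Spb = \{x \in \mathbb{R}^n \mid \hpb(x) = 0\}$, together with some associated continuous positive decrease $\Delta\hpb$.

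Next, I would observe that by the non-negativity constraints on the slack variables in~\eqref{eq:pcbf_state_constraint} and on $\xi_{N|k}$ together with $\alpha_f > 0$, the optimal value $\hpb(x) \geq 0$ for every $x \in \Dpb$. Consequently $\hpb(x) = \max\{0,\hpb(x)\}$ on all of $\Dpb$, so $\hpb$ is exactly the function $\tilde{h}$ constructed in Theorem~\ref{thrm:CBFs} from the Definition~\ref{def:CBFcont} CBF identified in the previous step. Applying Theorem~\ref{thrm:CBFs} then immediately yields class $\mathcal{K}$ functions $\alpha_1,\alpha_2,\alpha_3$ such that $\alpha_1(\Vert x\Vert_{\Spb}) \leq \hpb(x) \leq \alpha_2(\Vert x\Vert_{\Spb})$ for all $x \in \Dpb$ and
\begin{equation*}
\inf_{u\in\mathcal{U}}\{\hpb(f(x,u,0)) \mid f(x,u,0)\in\Dpb\} - \hpb(x) \leq -\alpha_3(\Vert x\Vert_{\Spb}),
\end{equation*}
which verifies all three conditions of Definition~\ref{def:CBFclK} with the stated domain $\Dpb$ and safe set $\Spb$, both of which are compact and non-empty by Theorem~\ref{thrm:PCBF}.

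There is no significant obstacle here: the entire content is the recognition that the value function produced by~\eqref{eq:pcbf} lives in the image of the $\max\{0,\cdot\}$ map, so Theorems~\ref{thrm:PCBF} and~\ref{thrm:CBFs} chain together directly. The only point requiring a brief justification is the non-negativity of $\hpb$, which follows from the sign constraints on the slacks in the defining optimisation problem.
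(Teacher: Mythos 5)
Your proposal is correct and follows essentially the same route as the paper: invoke Theorem~\ref{thrm:PCBF} to get a CBF in the sense of Definition~\ref{def:CBFcont}, note that $\hpb(x)=\max\{0,\hpb(x)\}$ on $\Dpb$ (the paper argues this from $\hpb=0$ on $\Spb$ and $\hpb>0$ on $\Dpb\setminus\Spb$, you from the non-negative slacks — the same fact), and conclude via Theorem~\ref{thrm:CBFs}.
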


\begin{proof}
    This result follows directly from Theorem~\ref{thrm:PCBF}, which shows that $h_{\mathrm{PB}}$ is a CBF according to Definition~\ref{def:CBFcont} under the given conditions. As $h_{\mathrm{PB}}=0$ for all $x\in\Spb$ and $h_{\mathrm{PB}}>0$ for all $x\in\Dpb\setminus\Spb$, it holds that $\hpb(x) = \max\{0, \hpb(x) \}$ and $\hpb$ is a CBF according to Definition~\ref{def:CBFclK} by Theorem~\ref{thrm:CBFs}.     
\end{proof}

\subsection{ISS of the Approximate PCBF Algorithm}
We can now establish theoretical guarantees of the closed-loop system under the proposed algorithm despite possible approximation errors and exogenous disturbances. First, we show that the optimisation problem~\eqref{eq:apcbf_sf} is feasible using the maximum decrease~\eqref{eq:apcbf_maxdec} for any $x\in\mathbb{R}^n$. Thereby a control input can always be computed using the provided algorithm, even if the state lies outside the theoretical region of attraction $\Dpb$, which may be a subset of the true region of attraction of the PCBF algorithm in~\cite{wabersich2022}.
\begin{proposition}
    The optimisation problem~\eqref{eq:apcbf_sf}, where $\Delta\ahpb(x)$ is defined in~\eqref{eq:apcbf_maxdec} is feasible for any $x(k)\in\mathbb{R}^n$ and $c_\Delta\in(0,1]$ if $\Delta \ahpb(x)\geq 0$ any $c_\Delta=1$ otherwise.
\end{proposition}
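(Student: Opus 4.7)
The plan is to show that the input which attains the maximum in the definition of $\Delta \ahpb$ is itself always feasible for the safety filter constraint. Concretely, let $u^\star(x(k)) \in \arg\max_{u \in \mathcal{U}} \bigl(\ahpb(x(k)) - \ahpb(f(x(k),u,0))\bigr)$. This maximiser exists for every $x(k) \in \mathbb{R}^n$ because $\mathcal{U}$ is compact by Assumption~\ref{ass:cont}, and the objective $u \mapsto \ahpb(x(k)) - \ahpb(f(x(k),u,0))$ is continuous as a composition of the continuous dynamics $f$ and the continuous regressor $\ahpb$. Hence a candidate input $u^\star$ is well-defined, and $u^\star \in \mathcal{U}$ by construction.

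Next, I would substitute $u = u^\star$ into the filter constraint. By definition of the maximum in \eqref{eq:apcbf_maxdec}, we have
\begin{equation*}
\ahpb(f(x(k),u^\star,0)) - \ahpb(x(k)) = -\Delta \ahpb(x(k)),
\end{equation*}
so the constraint $\ahpb(f(x(k),u,0)) - \ahpb(x(k)) \leq -c_\Delta \Delta \ahpb(x(k))$ reduces to checking $(1-c_\Delta)\Delta \ahpb(x(k)) \geq 0$.

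The final step is a short case distinction on the sign of $\Delta \ahpb(x(k))$ using the hyperparameter convention stated in the proposition. If $\Delta \ahpb(x(k)) \geq 0$, then $c_\Delta \in (0,1]$, so $1-c_\Delta \geq 0$, and the product is non-negative. If $\Delta \ahpb(x(k)) < 0$, then by the stated rule $c_\Delta = 1$, so $1-c_\Delta = 0$ and the product vanishes. In either case $u^\star$ satisfies the filter constraint, so \eqref{eq:apcbf_sf} is feasible for every $x(k) \in \mathbb{R}^n$.

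There is no real obstacle here: the result is essentially a consistency check between the definitions of the decrease function and the filter constraint. The only point that requires care is the existence of the maximiser $u^\star$, which is why I explicitly invoke compactness of $\mathcal{U}$ and continuity of $\ahpb$ and $f$; once $u^\star$ exists, the sign analysis is a one-line computation.
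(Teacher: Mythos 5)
Your proposal is correct and follows essentially the same route as the paper's proof: take the maximiser $u^\star$ of \eqref{eq:apcbf_maxdec}, note $u^\star\in\mathcal{U}$, substitute it into the constraint of \eqref{eq:apcbf_sf}, and conclude via the sign case distinction on $\Delta\ahpb(x)$ and the stated rule for $c_\Delta$. Your explicit justification of the existence of the maximiser via compactness of $\mathcal{U}$ and continuity of $\ahpb\circ f$ is a small but welcome addition that the paper leaves implicit.
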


\begin{proof}
    The optimisation problem~\eqref{eq:apcbf_maxdec} is feasible for any $x\in\mathbb{R}^n$ by choosing any $u\in\mathcal{U}$. Given an optimal solution $u^*$ to~\eqref{eq:apcbf_maxdec}, it holds that $u^*$ is feasible in~\eqref{eq:apcbf_sf} as $u^*\in\mathcal{U}$ and for any $x\in\mathbb{R}^n$, $\ahpb(f(x,u^*,0)) - \ahpb(x) = - \max_{u\in\mathcal{U}} (\ahpb(x) - \ahpb(f(x,u,0))) \leq -c_\Delta \Delta \ahpb(x)$ for all $c_\Delta\in(0,1]$ when $\Delta\ahpb(x)\geq 0$ and for $c_\Delta=1$ if $\Delta\ahpb(x)< 0$.
\end{proof}
Next, we can show that the closed-loop system is ISS with respect to the approximation errors of $\ahpb$ in $\Dpb$ if the approximation error is bounded.
\begin{assumption}\label{ass:error}
    The approximation error of $\ahpb$ is uniformly bounded, i.e., it holds that $|\hpb(x)-\ahpb(x)| \leq \epsilon_h$ for all $x \in \Dpb$. In other words, it holds that $\ahpb(x) =\hpb(x) + e_{h}(x) $, where the error $e_{h}(x)$ is bounded by $|e_h(x)|\leq \epsilon_h$.
\end{assumption}

This assumption holds for any continuous regressor $\ahpb$, such as neural networks with continuous activation functions. Due to continuity of $\ahpb$ and $\hpb$ as well as compactness of $\Dpb$, a maximum $\epsilon_h$ exists for $\max_{x\in\Dpb} |\hpb(x)-\ahpb(x)|$ according to the extreme value theorem~\cite[Proposition A.7]{rawlings2017}. In order to analyse robust stability properties of the closed-loop, we require the following definition of robust positive invariance.
\begin{definition}[Definition 2.2 in \cite{blanchini1999}]
    A set $\mathcal{A}\subset\mathbb{R}^n$ is robustly positively invariant for a system $x(k+1)\in F(x(k),w(k))$ if it holds for all $w(k)\in\mathcal{W}$, for some compact set $\mathcal{W}$ containing all possible disturbance realisations, that
    \begin{equation*}
        x(k)\in\mathcal{A}\Rightarrow F(x(k),w(k))\subseteq \mathcal{A}.
    \end{equation*}
\end{definition}

For ISS, we use the following comparison function definition.

\begin{definition}[Definition B.42 in \cite{rawlings2017}]\label{def:ISS}
    The system $x(k+1)\in F(x(k),w(k))$ is input-to-state stable in $\mathcal{D}$ with respect to the compact, non-empty and $0$-input invariant set\footnote{A set $\mathcal{S}\subset \mathcal{D}$ is denoted $0$-input invariant for the dynamics $x(k+1)\in F(x(k),w(k))$ if it is invariant for the dynamics $x(k+1)\in F(x(k),0)$.} $\mathcal{S}$ if there exists a $\mathcal{KL}$ function\footnote{According to \cite[Definition B.3]{rawlings2017}, a function $\beta:\mathbb{R}_{\geq0}\times \mathbb{N}_{\geq0}\rightarrow\mathbb{R}_{\geq0}$ belongs to class $\mathcal{KL}$ if it is continuous and if, for each $t\geq 0$, $\beta(\cdot, t)$ is a class $\mathcal{K}$ function and for each $s\geq0$, $\beta(s,\cdot)$ is non-increasing and satisfies $\lim_{t\rightarrow\infty}\beta(s,t)=0$.} $\beta$ and a $\mathcal{K}$ function $\sigma$ such that, for each $x\in\mathcal{D}$, with $\mathcal{D}$ compact and robustly positively invariant, and each disturbance sequence $\mathbf{w}=(w(0), w(1), \dots)$ in $\ell_\infty$
    \begin{equation*}
        \Vert\phi(i;x(0),\mathbf{w}_i)\Vert_\mathcal{S} \leq \beta(\Vert x\Vert_\mathcal{S},i) + \sigma(\Vert \mathbf{w}_i \Vert)
    \end{equation*}
    for all $i\in\mathbb{N}_{\geq 0}$, all solutions of the system $\phi(i;x(0),\mathbf{w}_i)$ at time $i$ if the initial state is $x(0)$, and the input sequence is $\mathbf{w}_i:=(w(0), w(1), \dots, w(i-1))$.
\end{definition}
We note that with regards to input-to-state stability, the disturbances $w(k)$ are considered as the input of the system. Following the definition of ISS, a corresponding ISS Lyapunov function can be defined, which we denote as ISS control barrier function in this paper to distinguish stability of singletons from stability of sets.
We use the following definition to characterize an ISS-CBF, implying ISS of the closed-loop as shown in \cite{jiang2001}. 
\begin{definition}[Definition B.43 in~\cite{rawlings2017}] \label{def:ISSCBF}
    Let $\mathcal{S}$ and $\mathcal{D}$ be compact and non-empty, with $\mathcal{S}$ $0$-input invariant, $\mathcal{D}$ robustly positively invariant and $\mathcal{S}\subset\mathcal{D}$. A function $h:\mathcal{D}\rightarrow \mathbb{R}$ is an ISS-CBF for system $x(k+1)\in F(x(k),w(k))$ if there exist class $\mathcal{K}$ functions 
    $\alpha_1$, $\alpha_2$, $\alpha_3$ and $\sigma$ such that for all $x\in\mathcal{D}, w\in\mathbb{R}^p$, 
    \begin{subequations}
    \begin{align}
        \alpha_{1}(\Vert x \Vert_\mathcal{S}) &\leq h(x) \leq \alpha_2(\Vert x \Vert_\mathcal{S}) \label{eq:ISSbounds} \\
        \sup_{x^+\in F(x,w)} h(x^+) &- h(x) \leq -\alpha_3(\Vert x \Vert_\mathcal{S}) + \sigma(\Vert w \Vert) \label{eq:ISSdec}
    \end{align}
    \end{subequations}
\end{definition}
The following Lemma establishes the relationship between the existence of an ISS-CBF and ISS of the system.
\begin{lemma}[\cite{jiang2001}] \label{lemma:ISSequiv}
    Suppose $\tilde{f}$ is continuous and that there exists a continuous ISS-CBF according to Definition~\ref{def:ISSCBF} for $x(k+1)\in F(x(k),w(k))$. Then the system $x(k+1)\in F(x(k),w(k))$ is ISS.
\end{lemma}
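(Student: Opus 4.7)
The plan is to imitate the standard Lyapunov-based ISS proof (as in Jiang--Wang~2001) with the distance to the origin replaced by $\Vert x\Vert_\mathcal{S}$, exploiting the sandwich bound in \eqref{eq:ISSbounds} together with the descent inequality \eqref{eq:ISSdec}. Throughout, fix an initial state $x(0)\in\mathcal{D}$ and an input sequence $\mathbf{w}\in\ell_\infty$, denote any trajectory by $x(k)=\phi(k;x(0),\mathbf{w}_k)$, and write $V(k):=h(x(k))$. Since $\mathcal{D}$ is robustly positively invariant, $V(k)$ is well defined for all $k\geq 0$.

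First, I would split the state-space into two regimes using a small-gain-style partition. Define $\chi(r):=\alpha_3^{-1}\bigl(2\sigma(r)\bigr)\in\mathcal{K}$. If $\alpha_3(\Vert x(k)\Vert_\mathcal{S})\geq 2\sigma(\Vert\mathbf{w}\Vert_\infty)$, then \eqref{eq:ISSdec} yields $V(k+1)-V(k)\leq -\tfrac{1}{2}\alpha_3(\Vert x(k)\Vert_\mathcal{S})$, which combined with $\alpha_3\circ\alpha_2^{-1}(V(k))\leq \alpha_3(\Vert x(k)\Vert_\mathcal{S})$ gives a genuine descent $V(k+1)\leq V(k)-\tfrac12(\alpha_3\circ\alpha_2^{-1})(V(k))$. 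If instead $\Vert x(k)\Vert_\mathcal{S}<\chi(\Vert\mathbf{w}\Vert_\infty)$, then $V(k+1)\leq \alpha_2(\chi(\Vert\mathbf{w}\Vert_\infty))+\sigma(\Vert\mathbf{w}\Vert_\infty)$ directly.

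Second, I would apply Sontag's comparison lemma (e.g.\ \cite[Lemma~4.3]{jiang2001}) to the scalar difference inequality of the first case to obtain a $\mathcal{KL}$ function $\tilde\beta$ with $V(k)\leq\tilde\beta(V(0),k)$ as long as the state stays in the descent regime. Combining with the second case, one gets a uniform bound $V(k)\leq\tilde\beta(\alpha_2(\Vert x(0)\Vert_\mathcal{S}),k)+\tilde\sigma(\Vert\mathbf{w}\Vert_\infty)$ for an appropriate $\tilde\sigma\in\mathcal{K}$, where the residual term absorbs both the bound in the small-state regime and the increase occurring during single transitions between the two regimes. Finally, inverting the lower bound $\alpha_1(\Vert x(k)\Vert_\mathcal{S})\leq V(k)$ produces
\begin{equation*}
    \Vert x(k)\Vert_\mathcal{S}\leq \alpha_1^{-1}\!\bigl(2\tilde\beta(\alpha_2(\Vert x(0)\Vert_\mathcal{S}),k)\bigr)+\alpha_1^{-1}\!\bigl(2\tilde\sigma(\Vert\mathbf{w}_k\Vert)\bigr),
\end{equation*}
which is precisely the $\mathcal{KL}+\mathcal{K}$ estimate required by Definition~\ref{def:ISS}.

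The main technical obstacle is the ``crossing'' between the two regimes: a trajectory entering the small-state regime may exit it transiently because of a disturbance spike, and the resulting jump in $V$ must still fit into a single $\mathcal{KL}$ envelope. The usual remedy, which I would adopt, is to show that the sublevel set $\{x\in\mathcal{D}\mid h(x)\leq\alpha_2(\chi(\Vert\mathbf{w}\Vert_\infty))+\sigma(\Vert\mathbf{w}\Vert_\infty)\}$ is forward invariant under \eqref{eq:ISSdec}, so that once the trajectory enters it, $V(k)$ is uniformly bounded by a $\mathcal{K}$-function of $\Vert\mathbf{w}\Vert_\infty$ thereafter; before entering, the strict descent gives the $\mathcal{KL}$ decay. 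Continuity of $\tilde f$ and of $h$, together with compactness of $\mathcal{D}$, ensures the comparison-lemma construction of $\tilde\beta$ is well defined, closing the argument. Since the whole derivation only uses $\Vert\cdot\Vert_\mathcal{S}$ as a ``generalized norm'' via the bounds in Definition~\ref{def:ISSCBF}, no modification beyond notation is needed relative to the classical scalar case.
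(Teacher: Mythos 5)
The paper does not prove this lemma itself---it is quoted directly from the cited reference (Jiang--Wang 2001), whose proof is precisely the Lyapunov/small-gain argument you sketch: regime splitting via $\alpha_3^{-1}(2\sigma(\cdot))$, a comparison-lemma $\mathcal{KL}$ bound on $V$, forward invariance of the disturbance-dependent sublevel set to handle regime crossings, and inversion of $\alpha_1$, all carried out with $\Vert\cdot\Vert_{\mathcal{S}}$ in place of the norm of $x$. Your proposal is therefore correct in outline and takes essentially the same route as the source the paper relies on.
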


\begin{remark}
    Note that in \cite[Appendix B]{rawlings2017}, the definitions of ISS and of an ISS-CBF are given with respect to the singleton $\mathcal{S}=\{0\}$, however this result is applicable also for compact sets $\mathcal{S}$, see, e.g., \cite[Remark 3.12]{jiang2001}. Additionally, the results are given for the difference equation $x(k+1)=f(x(k),w(k))$ rather than the difference inclusion $x(k+1)\in F(x(k),w(k))$, but this difference is minor due to the supremum in the decrease condition \eqref{eq:ISSdec}.
\end{remark}

By showing the existence of an ISS-CBF implies that the closed-loop system under the proposed approximate PCBF algorithm is ISS according to~\cite[Definition B.42]{rawlings2017}.
\begin{theorem}\label{thm:max_decrease}
    Let Assumptions~\ref{ass:cont},~\ref{ass:terminalCBF}~and~\ref{ass:error} hold. If $\Dpb$ is robustly positively invariant for the approximation $\hat{h}_{\mathrm{PB}}$ and disturbances $w(k)$, system~\eqref{eq:sys} under application of Algorithm~\ref{alg:apcbf} is ISS.
\end{theorem}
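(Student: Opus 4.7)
The plan is to exhibit $\hpb$ as a continuous ISS control barrier function for the closed-loop system $x(k+1) = f(x(k),u(k),w(k))$, where $u(k)$ is produced by Algorithm~\ref{alg:apcbf}, and then invoke Lemma~\ref{lemma:ISSequiv}. The combined perturbation to be tracked is the pair $(w,e_h)$ consisting of the exogenous disturbance and the approximation error allowed by Assumption~\ref{ass:error}.

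First I would use Corollary~\ref{cor:pcbf} to conclude that $\hpb$ satisfies Definition~\ref{def:CBFclK}, yielding class $\mathcal{K}$ bounds $\alpha_1(\Vert x \Vert_{\Spb}) \leq \hpb(x) \leq \alpha_2(\Vert x \Vert_{\Spb})$ on $\Dpb$ (so condition~\eqref{eq:ISSbounds} holds for $\mathcal{S}=\Spb$) together with a class $\mathcal{K}$ function $\alpha_3$ such that some $u\in\mathcal{U}$ achieves $\hpb(f(x,u,0)) - \hpb(x)\leq -\alpha_3(\Vert x\Vert_{\Spb})$. Using $\ahpb = \hpb + e_h$ with $|e_h|\leq \epsilon_h$, the same $u$ is feasible in~\eqref{eq:apcbf_maxdec}, so the maximum-decrease function satisfies $\Delta\ahpb(x) \geq \alpha_3(\Vert x\Vert_{\Spb}) - 2\epsilon_h$.

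Next I would analyse both branches of Algorithm~\ref{alg:apcbf} to obtain a single bound on the nominal one-step evolution of $\hpb$. In the shortcut branch, $\ahpb(f(x,u(k),0))\leq 0$ gives $\hpb(f(x,u(k),0))\leq \epsilon_h$, so with the lower bound $\hpb(x) \geq \alpha_1(\Vert x\Vert_{\Spb})$ we get $\hpb(f(x,u(k),0)) - \hpb(x) \leq -\alpha_1(\Vert x\Vert_{\Spb}) + \epsilon_h$. In the filter branch, the constraint in~\eqref{eq:apcbf_sf} combined with the bound on $\Delta\ahpb$ and the error bound yields
\begin{equation*}
\hpb(f(x,u(k),0)) - \hpb(x) \leq -c_\Delta \alpha_3(\Vert x\Vert_{\Spb}) + 2(1+c_\Delta)\epsilon_h.
\end{equation*}
Defining $\tilde\alpha_3 := \min\{\alpha_1, c_\Delta\alpha_3\}\in\mathcal{K}$ and a suitable constant $C>0$, taking the worse of the two branches produces
\begin{equation*}
\hpb(f(x,u(k),0)) - \hpb(x) \leq -\tilde\alpha_3(\Vert x\Vert_{\Spb}) + C\,\epsilon_h.
\end{equation*}

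Finally I would account for the exogenous disturbance. Robust positive invariance of $\Dpb$ together with compactness of $\Dpb$ and $\mathcal{U}$ and continuity of $f$ and $\hpb$ imply that $\hpb\circ f$ is uniformly continuous on $\Dpb\times\mathcal{U}\times\mathcal{W}$ by the Heine-Cantor theorem, so there exists $\sigma_w\in\mathcal{K}$ with $\hpb(f(x,u,w)) - \hpb(f(x,u,0))\leq \sigma_w(\Vert w\Vert)$ for all admissible $(x,u,w)$; a class $\mathcal{K}$ construction analogous to Lemma~\ref{lemma:clK_upbound} applied to the continuous modulus of continuity of $\hpb\circ f$ at $w=0$ provides this. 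Concatenating,
\begin{equation*}
\hpb(f(x,u(k),w(k))) - \hpb(x) \leq -\tilde\alpha_3(\Vert x\Vert_{\Spb}) + \sigma(\Vert (w(k),e_h)\Vert),
\end{equation*}
with $\sigma(\Vert(w,e_h)\Vert) := \sigma_w(\Vert w\Vert) + C\,\epsilon_h$. This verifies the ISS-CBF decrease~\eqref{eq:ISSdec}, so Lemma~\ref{lemma:ISSequiv} yields ISS of $\Spb$ in $\Dpb$ with respect to the combined input $(w,e_h)$.

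The main obstacle I expect is two-fold: reconciling the qualitatively different consequences of the two branches of Algorithm~\ref{alg:apcbf} into a single class $\mathcal{K}$ decrease, and constructing the class $\mathcal{K}$ modulus $\sigma_w$ for the disturbance perturbation without further regularity assumptions on $f$, which requires carefully invoking uniform continuity on the compact robustly invariant domain and converting the resulting non-decreasing continuity modulus into a class $\mathcal{K}$ upper bound as done in Lemma~\ref{lemma:clK_upbound}.
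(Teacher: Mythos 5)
Your proposal is correct and takes essentially the same route as the paper's proof: show that $\hpb$ is an ISS-CBF for the closed loop by combining the class $\mathcal{K}$ bounds from Corollary~\ref{cor:pcbf}, the error bound $\epsilon_h$ from Assumption~\ref{ass:error} transferred through the decrease constraint of~\eqref{eq:apcbf_sf} and~\eqref{eq:apcbf_maxdec}, and a class $\mathcal{K}$ modulus $\sigma_w(\Vert w\Vert)$ from uniform continuity on the compact robustly invariant domain, then invoke Lemma~\ref{lemma:ISSequiv}. The only difference is minor: you treat the certification branch of Algorithm~\ref{alg:apcbf} explicitly (via $\hpb(f(x,u_\mathrm{p},0))\leq\epsilon_h$ and the lower bound $\alpha_1$), whereas the paper folds both branches into the single constraint set $\mathcal{U}_{\ahpb}(x)$, which yields slightly different constants ($2(1+c_\Delta)\epsilon_h$ versus $4\epsilon_h$) but the same conclusion.
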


\begin{proof}
    We show that $\hpb(x)$ is an ISS-CBF according to Definition~\ref{def:ISSCBF} for the closed-loop system $x(k+1)\in \hat{F}(x(k),w(k)) = \{x^+\in\mathbb{R}^n \mid x^+ = f(x(k),u,w(k)), u\in\mathcal{U}_{\ahpb}(x(k))\}$, where we define $\mathcal{U}_{\ahpb}(x(k))\coloneqq\{u\in\mathcal{U}\mid \ahpb(f(x(k),u, 0)) \!-\! \ahpb(x(k)) \leq - c_\Delta \Delta\ahpb(x(k)) \}$. Given that inputs are selected as an optimal solution to~\eqref{eq:apcbf_sf}, it holds that under application of Algorithm~\ref{alg:apcbf}, $u(k)\in\mathcal{U}_{\ahpb}(x(k))$. As $\hpb$ is a CBF according to Definition~\ref{def:CBFclK}, the condition~\eqref{eq:ISSbounds} is satisfied. Furthermore, we can show that for any $u\in\mathcal{U}_{\ahpb}(x)$, it holds $\forall w\in\mathbb{R}^p$ that 
    \begin{align*}
        &\sup_{x^+\in\hat{F}(x,w)}\hpb(x^+) - \hpb(x) \\
        \leq & \ahpb(f(x,u,w))+|e_h(f(x,u,w))|-\ahpb(x)+|e_h(x)|  \\
        \leq & \ahpb(f(x,u,0)) - \ahpb(x) + \sigma_h(\Vert w \Vert) + |e_h(f(x,u,0))| \\ & +\sigma_e(\Vert w \Vert) + |e_h(x)| \\
        \leq & -c_\Delta\Delta\ahpb(x) + 2 \epsilon_h + \sigma(\Vert w\Vert) \\
        \leq & -c_\Delta \max_{u\in\mathcal{U}} \left( \ahpb(x)-\ahpb(f(x,u,0))\right) + 2 \epsilon_h + \sigma(\Vert w\Vert) \\
        \leq & -c_\Delta \Delta h(x) + 4 \epsilon_h + \sigma(\Vert w \Vert),
    \end{align*}
    where the class $\mathcal{K}$ functions $\sigma_h$ and $\sigma_w$ exist due to uniform continuity of $\hat{h}_{\mathrm{PB}}$ and $f$ on the compact domain $\Dpb$ as shown in~\cite{limon2009}. The set $\Spb$ is $0$-invariant with $\epsilon_h=0$ and $w=0$ and the set $\Dpb$ is robust invariant, if, e.g., $\epsilon_h$ and $w$ is sufficiently small, i.e., it holds that $4\epsilon_h+\sigma(\Vert w \Vert)\leq \alpha_f\gamma_f - \bar{c}$, where $\bar{c}\coloneqq \max_{x\in\Dpb} \hpb(x)- c_\Delta \alpha_3\circ\alpha_1^{-1}(\hpb(x))$, which ensures that $\hpb(x^+)\leq\alpha_f\gamma_f$ for all $x\in\Dpb$. 
\end{proof}

\begin{remark}\label{rem:approx_errors}
    In the proof of Theorem~\ref{thm:max_decrease}, the error $|e_h(x)|$ can also be bounded by a class $\mathcal{K}$ function $\gamma$ and a constant term $\tilde{\epsilon}_h\leq \epsilon_h$, such that $|e_h(x)|\leq \gamma(x) + \tilde{\epsilon}_h$. If it holds that $\forall x\in\Dpb$, $c_\Delta \Delta \hpb(x) > 2\gamma(x)$, the decrease condition is still satisfied with respect to the smaller constant $2\tilde{\epsilon}_h+2\epsilon_h$. This implies that approximation errors close to $\partial \Spb$ have a bigger impact on the closed-loop stability than errors when $\hpb$ is large, which can be leveraged when performing the approximation as discussed in Section~\ref{sec:NE}.
\end{remark} 

This analysis applies to any continuous approximation of a CBF according to Definition~\ref{def:CBFcont}~or~\ref{def:CBFclK}. It is especially useful when formal verification methods as discussed in~\cite{liu2021} are not applicable due to large networks and when exogenous disturbances are present. We note that for many neural network-based CBF approximations, guaranteeing that a decrease condition is satisfied is encouraged by specific loss functions during training, as discussed in~\cite{so2024}, but cannot typically be directly enforced. If the optimal value function $\hpb$ is Lipschitz continuous, the approach in~\cite{robey2020} could be applied to obtain a valid approximation, however the resulting optimization problem suffers from computational complexity issues. 

The established ISS result therefore provides an understanding of the closed-loop system behaviour when the CBF decrease constraint is possibly violated, either through approximation errors or disturbances. It implies that the system converges to and remains in a neighbourhood of the safe set. Additionally, sampling-based approaches as used in~\cite{hertneck2018} can provide high probability guarantees for invariance of $\Dpb$ given $\ahpb$ instead of the provided error bound. 

The proposed result directly implies ISS of the PCBF algorithm in~\cite{wabersich2022} as it holds that $\epsilon_h=0$. Furthermore, if $\Delta\ahpb(x)>0$ for all $x\in \hat{\mathcal{D}} \setminus \hat{\mathcal{S}}$, with $\hat{\mathcal{S}}=\{x\mid \ahpb(x)\leq 0\}$ and $\hat{\mathcal{D}}=\{x \mid \ahpb(x)\leq \hat{\gamma}\}$ for $\hat{\gamma}>0$, then $\max\{0, \ahpb(x)\}$ directly defines a CBF according to Definition~\ref{def:CBFclK}.

\section{NUMERICAL EXAMPLES} \label{sec:NE}
\subsection{Unstable Linear System}
As an illustrative example, we consider an unstable linearised pendulum with a damper component
\begin{equation*}\label{eq:syslin}
x(k+1)  =\begin{bmatrix}
    1 & T_s \\ T_s\frac{g}{l} & 1-T_sc
\end{bmatrix} x(k) + \begin{bmatrix}
    0 \\ 5T_s
\end{bmatrix} u(k),  \nonumber
\end{equation*}
where the states are given by the angle and angular velocity, i.e., $x=\left[ \psi\;\; \omega  \right]^\top \in\mathbb{R}^2$, the input $u\in\mathbb{R}$ and the dynamics are defined through the sampling time $T_s=0.1$, the damping coefficient is $c=1$, the length $l=1.3$ and $g=9.81$. The system is subject to polytopic state and input constraints $x(k)\in\mathcal{X}=\{x\in\mathbb{R}^2 \mid A_x x \leq b_x\}$ and $u(k)\in\mathcal{U}=\{u\in\mathbb{R} \mid A_u u \leq b_u\}$, which are given as bounds on the angle $-0.5 \leq \psi \leq 0.5$, the angular velocity $-1.5\leq \omega \leq 1.5$ and the input $-3\leq u \leq 3$. We choose the terminal penalty $\alpha_f=1$e6 and tightening values $\Delta_i=i\cdot 1$e-3 for $i=1,\dots, N$, with prediction horizon $N=10$.

\begin{table}[!t]
    \centering
    \vspace{0.0cm}
    \caption{Normalised magnitude of safety interventions for safety filters based on the pcbf and two underfit neural network approximations for 10000 random states}
    \begin{tabular}{|c|c|c|c|c|c|}
        \hline
        & PCBF~\cite{wabersich2022} &\begin{tabular}{@{}c@{}}  $\ahpb$ \\ $c_\Delta{=}1$\end{tabular}  &  
        \begin{tabular}{@{}c@{}}$\hat{h}_{\mathrm{CE}}$ \\ $c_\Delta{=}1$\end{tabular} & \begin{tabular}{@{}c@{}}$\ahpb$ \\ $c_\Delta{=}0.2$\end{tabular} & \begin{tabular}{@{}c@{}}$\hat{h}_{\mathrm{CE}}$ \\ $c_\Delta{=}0.2$\end{tabular} \\ \hline
        $\Dpb \setminus\Spb$ & 1 & 1.05 & 1.10 & 0.41 & 0.48 \\ \hline
        $\Spb$ & 1 & 9.33 & 3.50 & 1.92 & 1.35 \\ \hline
    \end{tabular}
    \label{tab:sf_int}
    \vspace{-0.5cm}
\end{table}

In order to compute the terminal CBF $h_f$, we follow a similar procedure to \cite[Section IV]{wabersich2022} by first computing a quadratic Lyapunov function $V(x)=x^\top P x$ and corresponding state feedback $u(k)=Kx(k)$ for the system \eqref{eq:syslin}. The Lyapunov function is computed such that the ellipse described by $V(x)\leq 1$ is the largest ellipse within the tightened state constraints $\bar{X}_{N-1}(0)$ such that $Kx\in\mathcal{U}$ according to the following semi-definite program
\begin{subequations}
    \begin{align}
    \min_{E,Y} & -\log\det(E) \\
    \textup{s.t. } & E\succeq 0 \\
    &\begin{bmatrix} E & (AE+BY)^\top \\ AE+BY & E
    \end{bmatrix} \succeq 0 \\
    & \begin{bmatrix} ([b_x]_i - \Delta_{N-1})^2 & [A_x]_iE \\ ([A_x]_iE)^\top & E 
    \end{bmatrix} \succeq 0 \;\; \forall i=1,\dots 4 \\
    & \begin{bmatrix} [b_u]_j^2 & [A_u]_jY \\ ([A_u]_jY)^\top & E 
    \end{bmatrix} \succeq 0 \;\; \forall j=1,\dots 2,
    \end{align}
\end{subequations}
similarly to, e.g., \cite[Section IV]{wabersich2022},
such that $P$ and $K$ can by computed as $P=E^{-1}$ and $K=YP$. The terminal CBF is then given by $h_f=V(x)-1$, such that $\mathcal{S}_f$ fulfils Assumption~\ref{ass:terminalCBF} and the domain $\mathcal{D}_f$ is given as the largest sublevel set $h_f(x)\leq\gamma_f\approx 2.95$ for which the input constraints $u=Kx\in\mathcal{U}$ are satisfied for all $x\in\mathcal{D}_f$.

The training and validation data sets $\mathbb{D}=\{(x_i, h(x_i))\}$ and $\mathbb{D}_{\textup{PB}}=\{(x_i, \hpb(x_i))\}$ are collected by gridding the state space in $-2\leq \psi \leq 2$ and $-8\leq \omega \leq 8$ as well as additional data points close to the constraints in $\{x\mid 0.8 b_x\leq A_x x\leq 1.2b_x\}$ using a rejection sampler for $\hpb(x_i)\leq 10$, resulting in approximately $3.5\textup{e}5$ data points. 
The data collection for the training data of the extended PCBF took approximately $45$min without parallelization on an Intel Core i9-7940X CPU @ $3.10$GHz.
In order to illustrate the advantages of the continuous extension of the PCBF, we approximate $\hpb$ and $h$ such that the models $\hat{h}_{\mathrm{CE}}$ and $\ahpb$ underfit the data. In order to achieve this underfit, we train two neural networks using PyTorch \cite{paszke2019} on $\mathbb{D}$ and $\mathbb{D}_{\textup{PB}}$ with two hidden layers of 4 neurons using the softplus activation function. The networks are trained for $50$ epochs using a learning rate of $1\textup{e}-3$, a batch size of $128$ and the mean squared error (MSE) as a loss function. Note that the expressiveness of the chosen networks is not high enough to accurately fit the data, such that after training the MSE with respect to the validation set is given by $0.25$ and $0.11$ for $\hat{h}_{\mathrm{CE}}$ and $\ahpb$, respectively. In order to compare the performance of both approximations in the safety filter formulation, we sample $1\textup{e}4$ random states within $\mathcal{X}$. The total safety filter interventions in terms of the norm of the difference between the computed input and a performance input $u_\mathrm{p}(k)=0$, normalised by the interventions of the PCBF algorithm in \cite{wabersich2022}, is given in Table~\ref{tab:sf_int}. 
We observe that the underfit models $\ahpb$ and $\hat{h}_{\mathrm{CE}}$ using $c_\Delta=1$ show a similar magnitude of safety interventions in $\Dpb\setminus\Spb$, being more conservative with $5\%$ and $10\%$ more interventions, respectively, compared to the full optimal control based algorithm. However, within $\Spb$, the approximation $\ahpb$ shows a $9$ times increase in safety interventions due to learning errors when $\hpb(x)=0$ possibly leading to positive values of $\ahpb$, which results in modifying the proposed input. The continuous extension reduces the impact of these errors by approximating negative values in $\Spb$, such that the safety interventions are only $3.5$ times as large. As the explicit approximation enables the introduction of the additional hyperparameter $c_\Delta$ to tune the rate of PCBF decrease in $\Dpb\setminus\Spb$, we also compare the magnitude using $c_\Delta=0.2$. It can be seen that this results in a significant decrease in interventions in $\Dpb\setminus\Spb$, reducing the magnitude by a factor of $2$ for both approximations.

Finally, we compare the closed-loop performance of both algorithms in Figure~\ref{fig:linear}. For this comparison, we train another neural network using the same data sets generated by sampling $h(x)$ by gridding the state space, where we choose a higher expressiveness of the model by using $3$ hidden layers with $16$ softplus activation functions. The model is trained in $30$ epochs using a batch size of $8$, a learning rate of $5\textup{e}-3$ and the MSE loss function, with a total computation time of approximately $25$min. 
The final validation MSE is given by $6\textup{e}-3$, which is two orders of magnitude lower compared to the two underfit models. Following the arguments in \cite{hertneck2018}, with a confidence of 1-$1$e-5, the probability that the domain $\{x\mid \hpb(x)\leq 10\}$, where training data was collected, is robustly invariant is greater than $99.75\%$. As a performance input, we choose $u_\mathrm{p}(k)=2$, which aims to drive the system into the constraint $\psi=0.5$ in an adversarial sense and causes a slight chattering in the velocity at the constraint in closed-loop. In practice, such chattering can be reduced using an additional regularisation penalising the change in input as done in~\cite{tearle2021}. By setting $c_\Delta=0.15$, we observe that in closed-loop, for 5 different initial conditions using the same initial angle $\psi=0$ and different velocities $\omega$, the system is allowed to converge much slower to $\Spb$, while still converging to the constraint $\psi=0.5$. 
We additionally provide a comparison to a soft-constrained predictive safety filter (PSF) in Figure~\ref{fig:linearPSF} for an initial velocity of $\omega=5$. The soft-constrained PSF exhibits the same behaviour as the PCBF algorithm when choosing a large slack penalty, denoted as $c_\xi$, in the cost and does not converge to the constraints for low values. The average solve times over the $150$ simulated time steps are given by $0.017$s for the soft-constrained PSF, $0.021$s for the PCBF algorithm in~\cite{wabersich2022} and $0.011$s for the approximate PCBF safety filter, which are solved using CasADi~\cite{andersson2019} and IPOPT~\cite{biegler2009}. We note that even for the small time horizon of $N=10$, we observe a computational speedup of over $50\%$. While the computation time of the PCBF algorithm and soft-constrained PSF are expected to increase with increasing time horizon, the online computation of the approximation based safety filter is independent of the considered time horizon and only the offline data collection time will be impacted.

\begin{figure}[!t]
\centering
\includegraphics[width=\columnwidth]{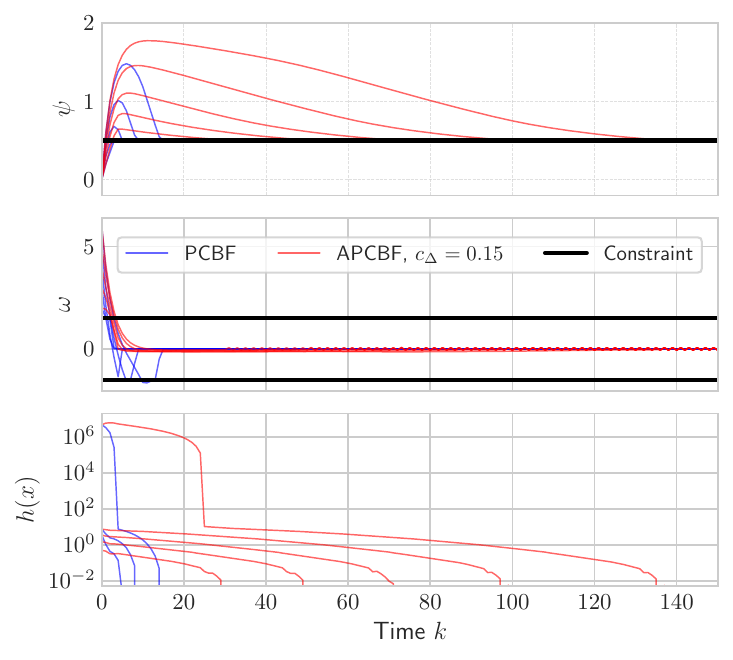}
    \caption{Closed-loop comparison of the PCBF-based safety filter in \eqref{eq:pcbf} and \eqref{eq:pcbf_sf} with the proposed approximation scheme for system \eqref{eq:syslin} with states given by the angle $\psi$ (top) and the angular velocity $\omega$ (middle). The explicit approximation enables tuning the rate of the PCBF decrease through the tuning parameter $c_\Delta\in(0,1]$, resulting in a slower convergence of the states to $\Spb$, seen by the corresponding CBF value $h(x)$ (bottom).}
    \label{fig:linear}
\end{figure}

\begin{figure}
\centering
\includegraphics[width=\columnwidth]{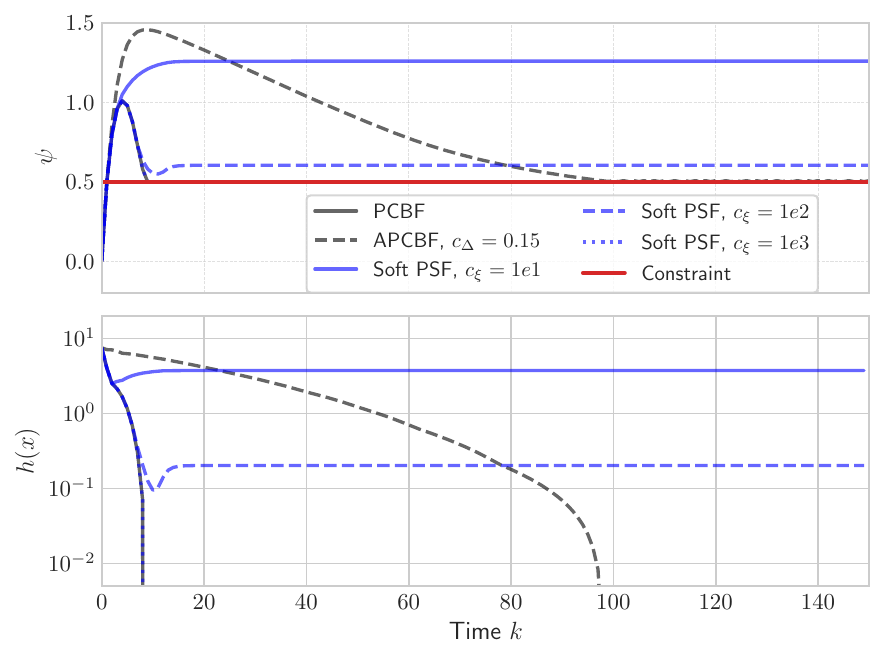}
    \vspace{-0.8cm}
    \caption{Closed-loop comparison of soft-constrained predictive safety filters and the PCBF-based safety filter in~\eqref{eq:pcbf} and~\eqref{eq:pcbf_sf} with the proposed scheme for an unstable pendulum. The angle $\psi$ (top) is depicted given a high initial angular velocity. The proposed formulation enables tuning the rate of the PCBF decrease through the tuning parameter $c_\Delta\in(0,1]$, resulting in a slower convergence of the states to $\Spb$ as can be observed from the PCBF values (bottom).}
    \vspace{-0.5cm}
    \label{fig:linearPSF}
\end{figure}

\subsection{Nonlinear System}

In this section, we demonstrate the potential for the proposed approximation-based algorithm as a computationally efficient safety filter. We consider a kinematic miniature race car model discretised using Euler forward and inspired by \cite{carron2023} of the form
\begin{align}\label{eq:syscar}
x(k+1)&=f(x(k),u(k)) \nonumber\\
&=\begin{bmatrix}
p_x(k) + T_s(v(k)\cos(\psi(k))) \\
p_y(k) + T_s(v(k)\sin(\psi(k))) \\
\psi(k) + T_s(v(k)/l\tan(\delta(k))) \\
v(k) + T_s u_0(k)\\
\delta(k) + T_s u_1(k)
\end{bmatrix},
\end{align}
where the length of the car is $l=0.09$m and the sampling time is $T_s=0.02$s. The states are given by the $x$- and $y$-position of the car, the heading angle $\psi$, the velocity $v$ and the steering angle $\delta$, i.e. $x=\left[p_x, p_y,\psi,v,\delta \right]^\top\in\mathbb{R}^5$ and the inputs are given by an acceleration and the steering change, resulting in $u\in\mathbb{R}^2$. The input constraints are given by $-5\leq u_0\leq 2$ and $-2.8\leq u_1\leq 2.8$ and the steering lies within $-0.75\leq \delta\leq 0.75$. The car operates in a cluttered environment with 9 obstacles, which are defined by $(p_x-o_{x,i})^2 + (p_y-o_{y,i})^2 \geq r_i^2$ for each obstacle $i$ with varying centre position $(o_{x,i},o_{y,i})$ and radius $r_i$. Finally, there are no constraints on the other states. For compactness of $\mathcal{D}$ and $\mathcal{S}$, they can be assumed to be large enough that they remain inactive during operation.

We consider an increasing tightening $\Delta_j=j\cdot1\textup{e}-4$ over a prediction horizon of $N=50$. As a terminal constraint, we impose the car to stop within the constraints $c_x(x)-\Delta_N\leq 0$ described above with a velocity $v=0$ and these terminal constraints are not softened with a terminal slack, i.e. $\xi_N=0$. 

The training and validation data sets $\mathbb{D}_{\textup{tr}}=\{(x_i,h(x_i)\}$ and $\mathbb{D}_{\textup{val}}$ are collected through the geometric sampler as described in Algorithm~\ref{alg:sampling}, with different initial seed and goal points such that the data is independent. The data is collected using $48$ cores on the ETH Euler cluster.
The number of collected training and validation points is approximately $1.2\textup{e}7$ and $6\textup{e}6$, with a computation time of $20$h and $16$h, respectively.
Given the fact that the approximation error for the negative values within $\Spb$ and for large values of $\hpb$ have a lower impact on the stability guarantees as also discussed in Remark~\ref{rem:approx_errors}, we preprocess the data by applying a cubic root to the function output, i.e. $y_i=\sqrt[3]{h(x_i)}$. Thereby, negative values in $\Spb$ remain negative and large positive values are decreased, while values close to $\partial\Spb$ are increased, such that approximation errors in this region incur a higher loss. We train a neural network with $5$ hidden layers of $32$ softplus activation functions on the preprocessed data using $20$ epochs, a learning rate of $2\textup{e}-2$ and a batch size of $128$ using a Huber loss function. The training time was $2.2$h and the final validation error is given by approximately $1\textup{e}-2$. Note that in principle the loss could be modified to prevent false positives in terms of safety certification of the inputs, as similarly done in \cite{lavanakul2024safety}, by encouraging the regressor to upper bound the true PCBF, but the approximation in this example was sufficiently accurate.

In order to demonstrate the computational advantages as a safety filter of the proposed approach, we use a naive model predictive control implementation as a performance controller, which is given as
\begin{align*}
    u_\mathrm{p}(k)\in\arg\min_{u_{i|k}} & \sum_{i=0}^{30} \Vert p_{x,i|k} +p_x(0)\Vert^2_2 + \Vert p_{y,i|k}+p_y(0)\Vert^2_2\\
    & \hspace{0.22cm}+ \Vert v_{i|k} - v_{\textup{ref}}\Vert^2_2 \\
    \textup{s.t. }& \forall i=0,\dots,29\\
    & x_{i+1|k}=f(x_{i|k},u{i|k}) \\
    &u_{i|k} \in\mathcal{U} \\
    &x_{0|k}=x(k) \\
\end{align*}
This implementation aims to the reflection of the initial position through the origin, while keeping the velocity constant at $v_{\textup{ref}}=3\textup{ms}^{-1}$. Notably, the optimisation problem is not aware of the obstacles, such that obstacle avoidance is purely a result of the safety filter. The closed-loop results for $5$ initial conditions over $70$ time steps are shown in Figure~\ref{fig:car}, where the system successfully avoids the obstacles in the cluttered environment, even at the high initial speed of $3\textup{ms}^{-1}$. The solve time for the neural network inference for safety verification, i.e. lines 4 and 5 in Algorithm~\ref{alg:apcbf}, take on average $0.26$ms and solving the two optimisation problems takes on average $37$ms. Comparatively, the PCBF algorithm in \cite{wabersich2022} requires on average $500$ms to solve the two predictive optimisation problems. The computational speedup is therefore a factor of $13$ in the case of overwriting the inputs and $1900$ when the proposed input is deemed safe. Finally, we provide the magnitude of the safety interventions for one of the trajectories in Figure~\ref{fig:inter}. It can be seen that the safety filter not only overrides the steer change before an obstacle is hit, but also the acceleration such that the car is able to navigate around the obstacles safely.

\begin{figure}[!t]\label{fig:car}
\vspace{0.1cm}
\centering
\includegraphics[width=\columnwidth]{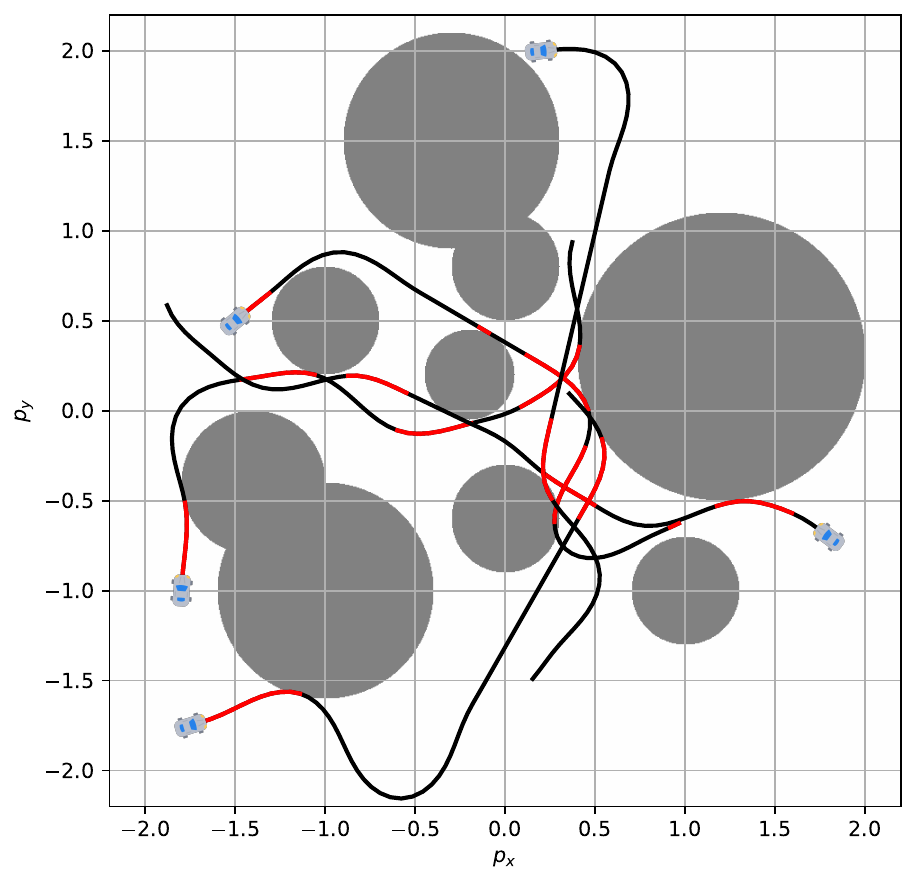}
    \caption{Closed-loop trajectories of the simulated car dynamics, where multiple initial conditions with a velocity of $3$m/s and an initial steering angle of $\delta=0$ are shown (car). The performance controller is given by a naive MPC implementation which is not aware of the obstacles, such that the distance to the opposite side of the area is minimised. States where the safety filter intervened are marked in red.}
\end{figure}

\begin{figure}[!t]\label{fig:inter}
\centering
\includegraphics[width=\columnwidth]{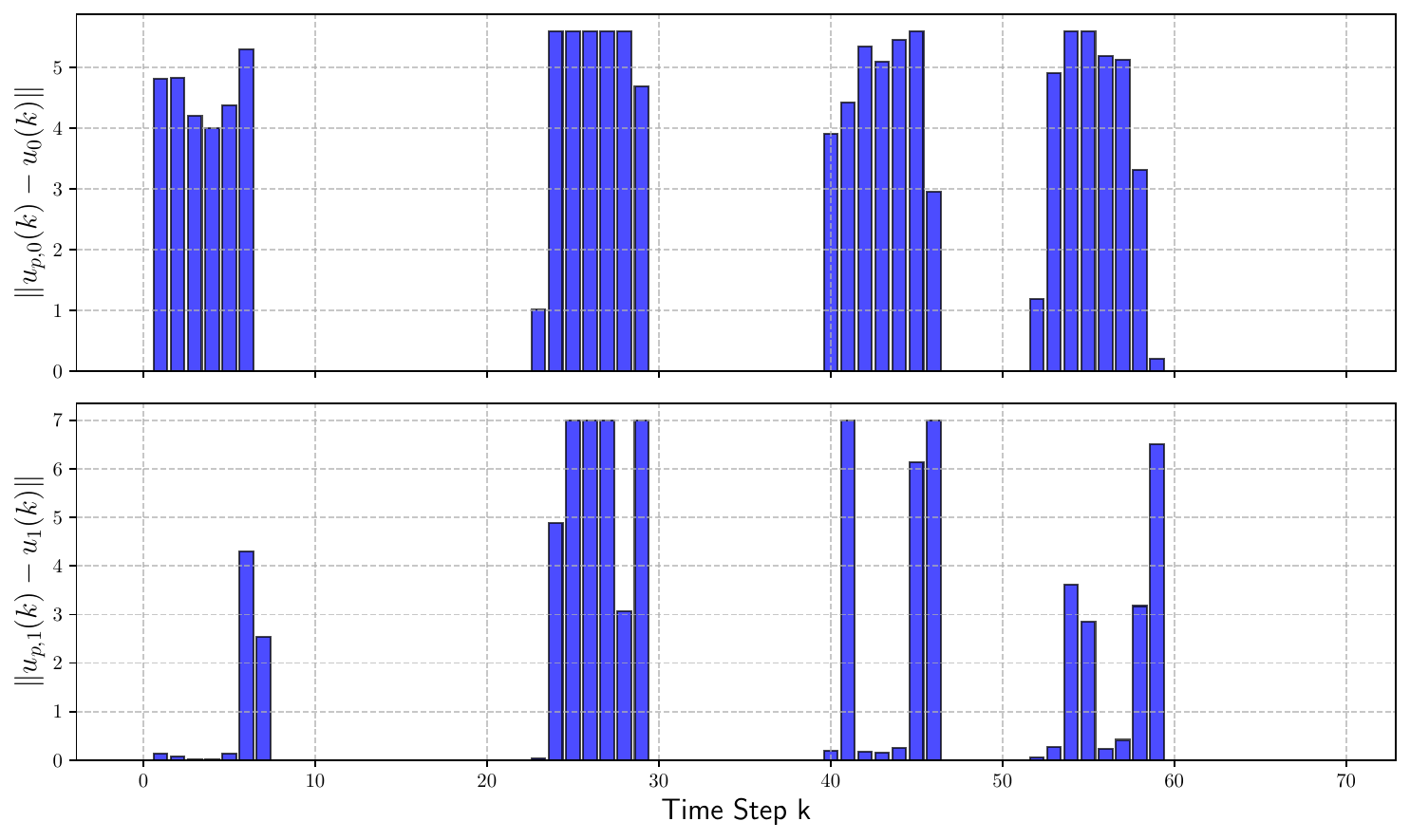}
    \caption{Magnitude of safety filter interventions of the steer change (top) and the acceleration (bottom) for the trajectory with initial position $(-1.8, -1.0)$.}
\end{figure}
\section{CONCLUSION} \label{sec:concl}
We extend initial results in~\cite{didier2023}, introducing an approximation of the PCBF in a CBF-based safety filter. 
We advance the theoretical analysis with respect to exogenous disturbances of the algorithms in~\cite{wabersich2022, didier2023} and propose a negative continuous extension of the predictive control barrier function to mitigate approximation errors within the safe set. The theoretical properties are validated in an illustrative linear example and the practicality of a computationally advantageous safety filter is shown in a miniature race car simulation. Future work could involve integrating the PCBF framework in a reinforcement learning setting similar to~\cite{fisac2019} to scale the training to possibly higher state dimensions as well as including contextual information of the environment.

\begin{appendix}\label{app}
In order to achieve a computationally efficient sampling of the extended PCBF $h_{\textup{CE}}$ in \eqref{eq:PCBFext}, we propose a modification of the geometric sampling method presented in \cite{chen2022} in Algorithm~\ref{alg:sampling}. The geometric sampling method allows using previously computed solutions to the predictive optimisation problems for the computation of $h_{\textup{CE}}$ as a warmstart as well as storing information if the previous sample on a line search was within $\Spb$ or $\Dpb\setminus\Spb$, reducing the number of times two optimisation problems have to be solved.
\begin{algorithm}[!t]
\caption{Sampling $h_{\textup{CE}}$}\label{alg:sampling}
\begin{algorithmic}[1]
\STATE Generate goal points $\mathbb{G}$ according to \cite[Algorithm 2]{chen2022}
\STATE Initialise boolean $\mathrm{in\_}\mathcal{S}_{\mathrm{PB}}\leftarrow\mathrm{False}$
\STATE Choose initial seed $\mathbb{S}\leftarrow\{\mathbf{s}\} = \{ (x, \mathbf{o},\mathrm{in\_}\mathcal{S}_{\mathrm{PB}}) \}$
\STATE Initialise data set $\mathbb{D}\leftarrow\emptyset$
\STATE Choose step size $d>0$
\FOR{goal point $g\in \mathbb{G}$ (in parallel)} 
    \STATE Sample seed $\mathbf{s}_0\leftarrow(x_0, \mathbf{o}_0, \mathrm{in\_}\mathcal{S}_{\mathrm{PB},0})\in\mathbb{S}$
    \STATE Compute number of points on line $n\leftarrow\Vert x_0-g\Vert/d$
    \FOR{$i=1, \dots, n$}
        \STATE $x_i\leftarrow x_0 + i (g-x_0)/n$
        \IF{$\mathrm{in\_}\mathcal{S}_{\mathrm{PB},i-1}$}
            \STATE Obtain $\mathbf{o}_i$, $h_{\textup{CE}}(x_i)$ from \eqref{eq:optprobwithnegativeslacks} with warmstart $\mathbf{o}_{i-1}$
            
            \IF{infeasible}
                \STATE $\mathrm{in\_}\mathcal{S}_{\mathrm{PB},i}\leftarrow$ False
                \STATE Obtain $\mathbf{o}_i$, $h_{\textup{CE}}(x_i)$ from \eqref{eq:pcbf} with warmstart $\mathbf{o}_{i-1}$
            \ELSE 
                \STATE $\mathrm{in\_}\mathcal{S}_{\mathrm{PB},i}\leftarrow$ True
            \ENDIF
            \STATE $\mathbf{s}_i \leftarrow (x_i, \mathbf{o}_i, \mathrm{in\_}\mathcal{S}_{\mathrm{PB},i})$
            \STATE $\mathbb{D}\leftarrow \mathbb{D}\cup (x_i, h_{\textup{CE}}(x_i))$
        \ELSE
            \STATE Obtain $\mathbf{o}_i$, $h_{\textup{CE}}(x_i)$ from \eqref{eq:pcbf} with warmstart $\mathbf{o}_{i-1}$  
            
            \IF{$\hpb(x_i)=0$}
                \STATE $\mathrm{in\_}\mathcal{S}_{\mathrm{PB},i}\leftarrow$ True
                \STATE Obtain $\mathbf{o}_i$, $h_{\textup{CE}}(x_i)$ from \eqref{eq:optprobwithnegativeslacks} with warmstart $\mathbf{o}_{i-1}$
            \ELSE
                \STATE $\mathrm{in\_}\mathcal{S}_{\mathrm{PB},i}\leftarrow$ False
            \ENDIF
            \STATE $\mathbf{s}_i \leftarrow \{(x_i, \mathbf{o}_i, \mathrm{in\_}\mathcal{S}_{\mathrm{PB},i})\}$
            \STATE $\mathbb{D}\leftarrow \mathbb{D}\cup (x_i, h_{\textup{CE}}(x_i))$
        \ENDIF
    \ENDFOR
    \STATE $\mathbb{S}\leftarrow\mathbb{S}\cup \mathbf{s}_n$
\ENDFOR
\end{algorithmic}
\end{algorithm}
\end{appendix}

\section*{References}

\begin{IEEEbiography}[{\includegraphics[width=1in,height=1.25in,clip,keepaspectratio]{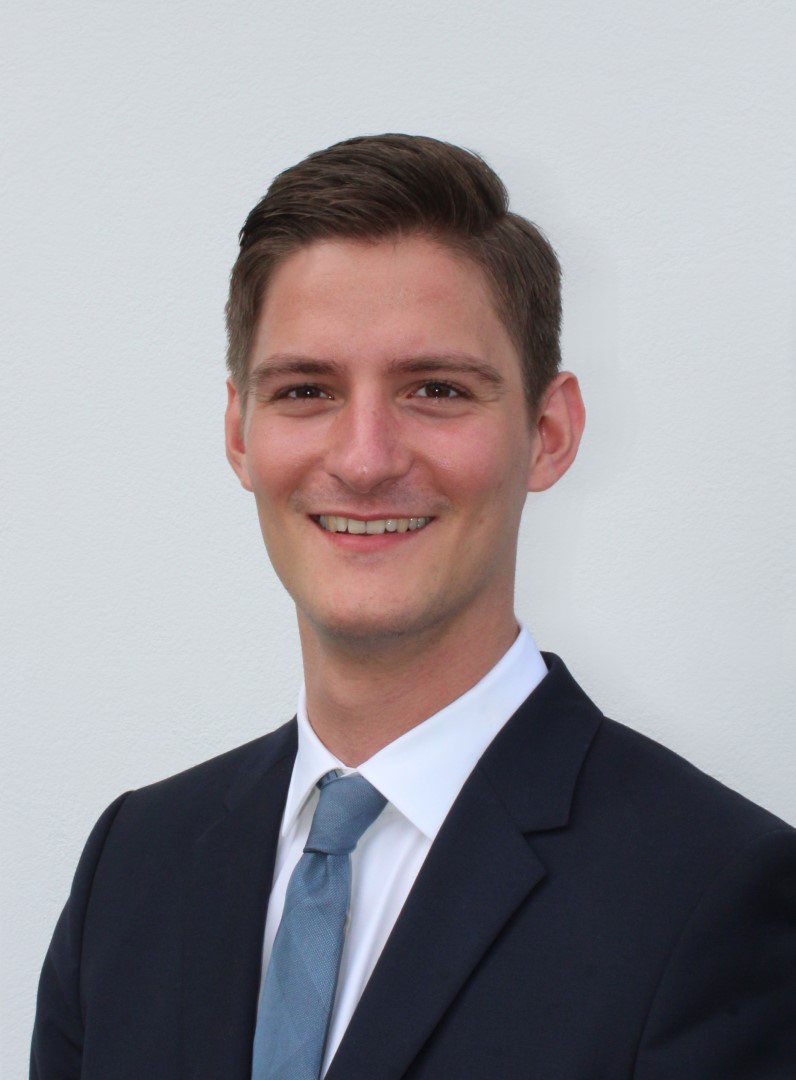}}]{Alexandre Didier} 
received his BSc. in Mechanical Engineering from ETH Zurich in 2018 and his MSc. in Robotics, Systems and Control from ETH Zurich in 2020. He is now pursuing a Ph.D. under the supervision of Prof. Dr. Melanie Zeilinger in the Intelligent Control Systems Group at the Institute of Dynamic Systems and Control (IDSC) at ETH Zurich. His research interests lie especially in Model Predictive Control and Regret Minimisation for safety-critical and uncertain systems.
\end{IEEEbiography}

\begin{IEEEbiography}[{\includegraphics[width=1in,height=1.25in,clip,keepaspectratio]{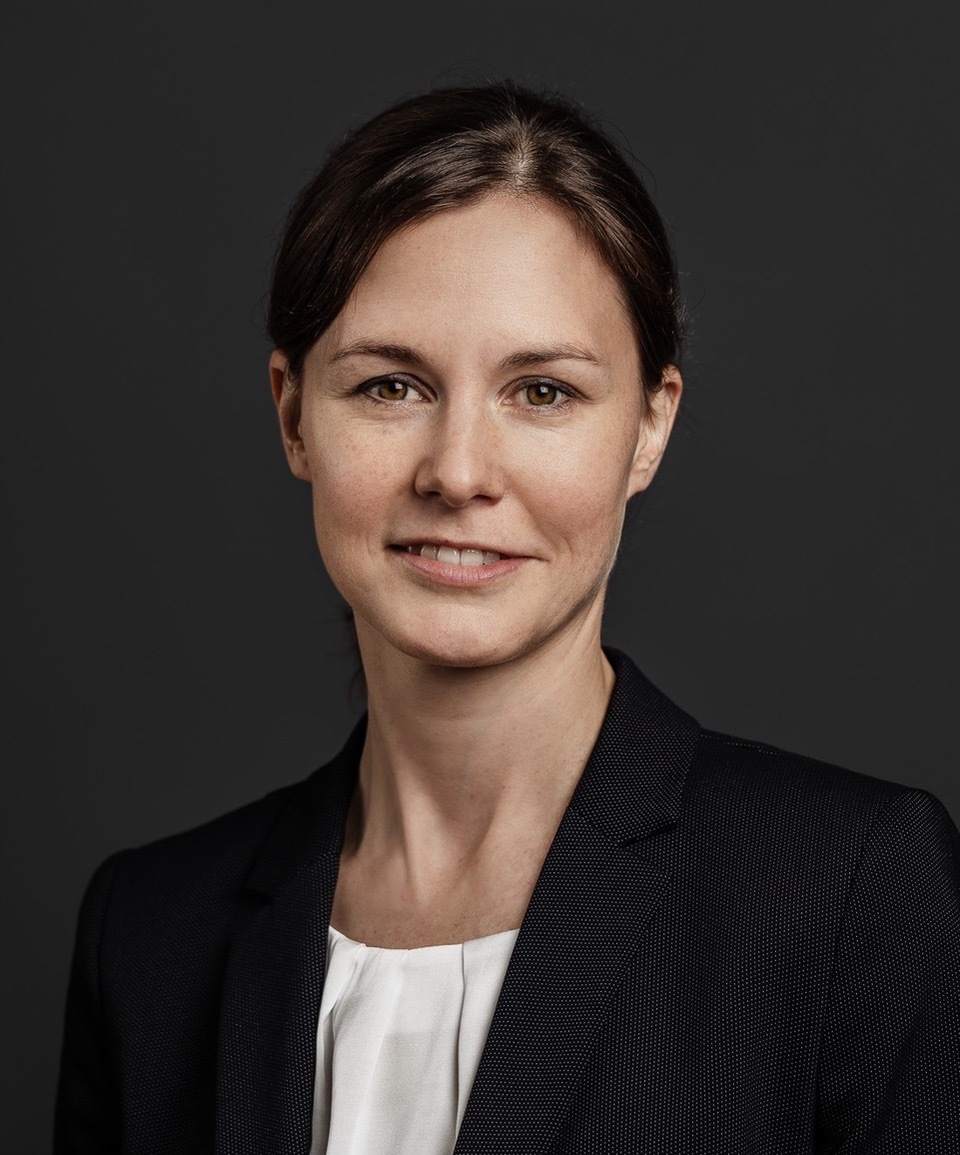}}]{Melanie N. Zeilinger} 
is an Associate Professor at ETH Zürich, Switzerland. She received the Diploma degree in engineering cybernetics from the University of Stuttgart, Germany, in 2006, and the Ph.D. degree with honors in electrical engineering from ETH Zürich, Switzerland, in 2011. From 2011 to 2012 she was a Postdoctoral Fellow with the Ecole Polytechnique Federale de Lausanne (EPFL), Switzerland. She was a Marie Curie Fellow and Postdoctoral Researcher with the Max Planck Institute for Intelligent Systems, Tübingen, Germany until 2015 and with the Department of Electrical Engineering and Computer Sciences at the University of California at Berkeley, CA, USA, from 2012 to 2014. From 2018 to 2019 she was a professor at the University of Freiburg, Germany. Her current research interests include safe learning-based control, with applications to robotics and human-in-the loop control.
\end{IEEEbiography}

\end{document}